\newtheorem{theorem}{Theorem}
\newtheorem{lemma}[theorem]{Lemma}
\newtheorem{proposition}[theorem]{Proposition}
\newtheorem{remark}[theorem]{Remark}
\newenvironment{proof}[1][Proof]{\noindent\textbf{#1.} }{\ \rule{0.5em}{0.5em}}
\newcommand{\inv}{^{-1}}
\newcommand{\bR}{\mathbb{R}}
\newcommand{\bC}{\mathbb{C}}
\newcommand{\cS}{\mathcal{S}}
\newcommand{\cX}{\mathcal{X}}
\newcommand{\cY}{\mathcal{Y}}
\newcommand{\cM}{\mathcal{M}}
\newcommand{\cL}{\mathcal{L}}
\newcommand{\bE}{\mathbb{E}}
\newcommand{\one}{\mathbf{1}}
\newcommand{\tr}{\operatorname{tr}}
\newcommand{\eps}{\varepsilon}
\begin{document}

\title{Guesswork with Quantum Side Information}

\author{%
   \IEEEauthorblockN{Eric P. Hanson\IEEEauthorrefmark{1},  Vishal Katariya\IEEEauthorrefmark{2}, Nilanjana Datta\IEEEauthorrefmark{1},
                     and Mark M.~Wilde\IEEEauthorrefmark{2}}
    
    \IEEEauthorblockA{\IEEEauthorrefmark{1}%
                        Department of Applied Mathematics and Theoretical Physics,\\
University of Cambridge, Cambridge CB3 0WA, UK}

   \IEEEauthorblockA{\IEEEauthorrefmark{2}%
                     Hearne Institute for Theoretical Physics, Department of Physics and Astronomy, and \\Center for Computation and Technology,
                     Louisiana State University,\\Baton Rouge, Louisiana 70803, USA}

 }

\maketitle               
\begingroup\renewcommand\thefootnote{\textsection}
\footnotetext{This paper was presented in part at ISIT 2020.}
\footnotetext{Copyright (c) 2017 IEEE. Personal use of this material is permitted.  However, permission to use this material for any other purposes must be obtained from the IEEE by sending a request to pubs-permissions@ieee.org.}
\endgroup

\begin{abstract}
What is the minimum number of guesses needed on average to guess a realization of a random variable correctly? The answer to this question led to the introduction of a quantity called \emph{guesswork} by Massey in 1994, which can be viewed as an alternate security criterion to entropy. In this paper, we consider the guesswork in the presence of quantum side information, and show that a general sequential guessing strategy is equivalent to performing a single quantum measurement and choosing a guessing strategy based on the outcome. We use this result to deduce entropic one-shot and asymptotic bounds on the guesswork in the presence of quantum side information, and to formulate a semi-definite program (SDP) to calculate the quantity. We evaluate the guesswork for a simple example involving the BB84 states, both numerically and analytically, and we prove a continuity result that certifies the security of slightly imperfect key states when the guesswork is used as the security criterion. 
\end{abstract}
%
%


\section{Introduction}

Information theory, among other things, concerns the security of messages against attacks by malicious agents. Conventionally, it is accepted that that the more unpredictable a message is, and the higher the (Shannon) entropy of the distribution from which it is drawn, the more secure it is to brute force attacks. Therefore, when establishing a secret key or a cipher, the gold standard is to choose a key whose elements are picked uniformly at random from some alphabet.

Entropy, however, is not the only such criterion for security. Another relevant quantity, which is also maximized by messages drawn uniformly, is the guesswork. First put forth by Massey \cite{Massey1994}, the quantity is operationally described by the following guessing game. Consider the problem of guessing a realization of a random variable $X$, taking values in a finite alphabet $\mathcal{X}$, by asking questions of the form ``Is $X=x$?''. The \emph{guesswork} $G(X)$ is defined as the minimum value of the average number of questions of this form that needs to be asked until the answer is ``yes''. That is, 
\begin{equation} \label{eq:intro-simple-guesswork}
    G(X) = \sum_{k=1}^{| \cX |} k \cdot p_G(k)
\end{equation}
where $p_G(k)$ is the probability of the $k$th guess being correct. In the real world, questions of this form arise from query access to a resource; for example, if a hacker is attempting to guess a user's password on an online portal, he or she can only ask this kind of question (as opposed to, say, ``Is $X \geq x$?'') and is allowed a limited number of guesses before being locked out. Therefore, for someone setting up a password, the number of guesses allowed by the portal provides the operational security criterion against which his or her password must compare.

In contrast, the entropy of a distribution is approximately the minimum value of the average number of guesses required to obtain the correct guess when one is allowed to ask questions of the form, ``Is $X \in \widetilde{\mathcal{X}}?$", where each $\widetilde{\mathcal{X}}$ is some subset of the alphabet $\mathcal{X}$ \cite[Theorem~5.4.1]{book1991cover}. Qualitatively speaking, entropy can be considered to be the query complexity of a binary search-type algorithm, whereas guesswork corresponds to the query complexity of a linear search-type  algorithm~\cite{Lundin2007}. Figure~\ref{fig:binary-linear-search} illustrates this difference. It is well known that binary search has a smaller complexity than linear search, which leads to the simple claim that the entropy of a distribution is always less than the guesswork. Massey \cite{Massey1994} proved the following stronger lower bound on the guesswork in terms of the Shannon entropy of the random variable $X$:
\begin{equation}
G(X) \geq \frac{1}{4} 2^{H(X)} + 1 ,
\end{equation}
provided that $H(X) \geq 2$ bits.

\begin{figure}
    \centering
    \includegraphics[width=0.48\textwidth]{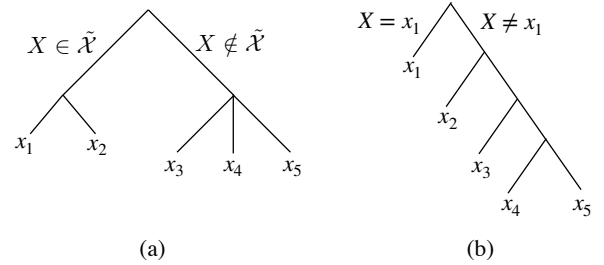}
    \caption{Consider an unknown value of $X$ in the set $\{x_1, x_2, x_3, x_4, x_5\}$. The search tree for finding the correct value of $X$ when asking questions of the form (a) ``Is $ X \in \tilde{\mathcal{X}}?$'' or of the form (b) ``Is $X = x?$'' The former search tree, similar to that in binary search, has fewer branches and the relevant operational quantity is the entropy of the prior distribution. The specific question we ask in this tree is as follows: ``Is $X \in \tilde{\mathcal{X}} = \{x_1, x_2\}$?'' The latter tree, similar to that of linear search, characterizes the scenario of guesswork. The tree encapsulates  a strategy in which one sequentially guesses $x_1, x_2$, and so on, until $x_5$.}
    \label{fig:binary-linear-search}
\end{figure}

Massey \cite{Massey1994} also showed that the optimal algorithm to minimize the guesswork, and even the positive moments of the number of guesses, consists of the intuitive strategy of simply guessing elements in decreasing order of their probability of occurrence. Guesswork can also be considered in the presence of classical side information given in the form of a random variable $Y$ that is correlated with $X$. In this case, the guesswork is the minimal number of questions of the form ``Is $X=x$?'' that is required on average to obtain the correct answer, given the value of $Y$. The average is taken over the number of guesses, with each guess number weighted by its probability of being correct. That is,
\begin{equation}
    G(X|Y=y) = \sum_{k=1}^{| \cX |} k \cdot p_G(k)
\end{equation}
where in this case, $p_G(k)$ is the probability of the $k$th guess being correct, given that $Y=y$. The optimal guessing strategy is simply to guess in decreasing order of conditional probability $p_{X|Y}(x|y)$. Arikan~\cite{Arikan1996} obtained upper and lower bounds on the guesswork and its positive moments, in this scenario as well as in the case without side information. Further work on guesswork in the classical setting has been done in~\cite{Arikan1998, Arikan1998a, Malone2004, Sundaresan2007, Hanawal2010, Christiansen2013, Sason2018a, Sason2018}.

In this paper, we consider a natural generalization of the above guessing problem to the case in which the classical side information is replaced by quantum side information. This generalization was first considered in~\cite{CCWF15}. In this case, the guesser (say, Bob) holds a quantum system $B$, instead of a classical random variable (or equivalently, a classical system) $Y$.   Here, the joint state of $X$ and $B$ is given by a classical-quantum (c-q) state, which we denote as $\rho_{XB}$ (see \Cref{sec:problem_statement} for details).
Suppose that Alice possesses the classical system $X$ containing the value of the letter $x$, which is to be guessed by Bob.
We define guesswork in the presence of quantum side information to be the minimum number of guesses needed, on average, for Bob to correctly guess Alice's choice, by performing a general sequential protocol, as follows. Bob acts on his system $B$ with a quantum instrument, yielding a classical outcome~$\hat x$ which he guesses, as well as a post-measurement state on system B. If his guess is incorrect, he performs another instrument on his system $B$ (possibly adapted based on his previous guess), and repeats the protocol until he either guesses correctly or runs out of guesses (which might be the case if he is allowed a limited number of guesses $K < |\mathcal{X}|$).

While the case of classical side information admits a very simple optimal strategy (which amounts to simply sorting the conditional probabilities $p_{X|Y}(\cdot|y)$ in non-increasing order and guessing accordingly), the quantum case requires measurement on the quantum system $B$, which potentially disturbs the state of $B$, a priori complicating the analysis of the sequence of guesses in the optimal strategy. We show in \Cref{sec:quantum_strategies}, however, that a general sequential strategy is in fact equivalent to performing a single generalized measurement yielding a classical random variable $Y$ of outcomes and then doing the optimal strategy using this $Y$ as the classical side information. The earlier work~\cite{CCWF15} instead defined guesswork with quantum side information as the latter quantity, {{i.e.}}, a measured version of the guesswork in the presence of classical side information. We prove here that these definitions are equivalent, and thus there is ultimately no difference between them, but we consider the definition in terms of a sequential protocol to be a more natural one. Moreover, the above-mentioned equivalence is proved via an explicit construction, allowing such a guessing strategy to be implemented sequentially. The single-measurement protocol could in general involve making a measurement with exponentially (in $|\mathcal{X}|$) many outcomes. Hence it may be more efficient to  implement it instead as a sequence of (linearly-many) measurements with linearly-many outcomes, as allowed by the above construction.

Moreover, we consider  a slight generalization of the guesswork in which Bob may  make only $K \leq |\cX|$ guesses in total, and in which the ``cost'' of needing to make $k$ guesses is given by a vector $\vec c = (c_1,c_2, \dotsc, c_K)$, which could be different from $(1,2, \dotsc, K)$, the latter of which corresponds to the expected value. These generalizations can be better models for certain situations; in the password-guessing example, e.g., Bob may be locked out after $K$~guesses and hence is limited to a small number of guesses, or perhaps one has to wait after each guess before making another, and the time that one waits increases with the number of incorrect guesses. We show that this generalized situation (including the guesswork as a special case) admits a semi-definite programming (SDP) representation in which the number of variables scales as $|\cX|^K$, and hence smaller values of $K$ yield smaller problems and better scaling with $|\cX|$. See \Cref{sec:SDP} for more on the computational aspects of the guesswork.

One can consider a related task, in which one wishes to maximize what is known as the ``guessing probability'' $p_{\text{guess}}(X|B)$~\cite{Koenig2009}. In this case, the guesser is given only one attempt to guess the value of $X$ (and hence is free to perform any arbitrary measurement on his system $B$). The guessing probability is related to the so-called \emph{conditional min-entropy} $H_\text{min}(X|B)$ of the c-q state $\rho_{XB}$. In some sense, we can consider the guesswork to be an extension of the guessing probability.  However, the nature of the optimization being done is different: instead of maximizing the probability of success in one attempt, we minimize the total number of guesses required. Therefore, the operations that a guesser performs to minimize the guesswork may be very different from those needed to maximize the guessing probability. Some of the connections between these two tasks have been investigated in~\cite{CCWF15}.


\paragraph{Overview}
In \Cref{sec:problem_statement}, we formally describe the task of guesswork with quantum side information. In \Cref{sec:strategies}, we define classical and quantum guessing strategies in a unified framework, and  \Cref{thm:ordered-from-adaptive} states that three classes of quantum strategies are equivalent. In \Cref{sec:entropic_bounds}, we establish one-shot and asymptotic entropic bounds on the guesswork, using analogous bounds developed by Arikan for the case of classical side information \cite{Arikan1996}. In \Cref{sec:SDP}, we revisit the idea that the guesswork may be formulated as a semi-definite optimization problem (SDP) (originally discussed in \cite{CCWF15}), and we use such a representation to prove that the guesswork in a concave function (see \Cref{sec:concavity}) and a Lipschitz continuous function (see \Cref{sec:Lipschitz}). We discuss the dual formulation of the SDP in \Cref{sec:dual}, a resulting algorithm to efficiently compute upper bounds in \Cref{sec:ub_algo}, and we present a mixed-integer SDP representation in \Cref{sec:misdp}. \Cref{sec:example} shows two simple examples of the guesswork involving the BB84 states and the Y states, and \Cref{sec:security} provides a robustness result for using guesswork as a security criterion.

\section{Statement of the problem}\label{sec:problem_statement}

Alice chooses a letter $x\in \cX$ with some probability $p_X(x)$, where $\cX$ is a finite alphabet. This naturally defines a random variable $X\sim p_X(x)$. She then sends a quantum system $B$ to Bob, prepared in the state  $\rho_B^{x}$, which depends on her choice $x$. Bob knows the set of states $\{\rho_B^x : x \in \cX\}$, and the probability distribution $\{p_X(x) : x \in \cX\}$, but he does not know which particular state is sent to him by Alice. Bob's task is to guess $x$ correctly with as few guesses as possible.
From Bob's perspective, he therefore has access to the $B$-part of the c-q state
\begin{equation}\label{eq:cq-state}
\rho_{XB} := \sum_x p_X(x)|x\rangle \!\langle x|_X \otimes \rho_B^x.
\end{equation}

In the purely classical case, this task reduces to the following scenario: Alice holds the random variable $X\sim p_X(x)$, and Bob holds a correlated random variable $Y$ and knows the joint distribution of $(X,Y)$. In this case $\rho_{XB}$ reduces to the state
\begin{equation} \label{eq:cq-classical}
\rho_{XY} = \sum_x p_X(x)|x\rangle \!\langle x|_X \otimes \sum_y p_{Y|X}(y|x) |y\rangle \!\langle y|_Y.
\end{equation}
In this case, if Bob's random variable $Y$ has value $y$, then an optimal guessing strategy is to sort the conditional distribution $p_{X|Y}(\cdot |y)$ in non-increasing order so that
\begin{equation}
p_{X|Y}(x_1 | y) \geq p_{X|Y}(x_2 | y) \geq \dotsc \geq p_{X|Y}(x_{|\cX|} | y)
\end{equation}
and simply guess first $x_1$, then $x_2$, etc., until he gets it correct~\cite{Arikan1996}. We note that in the absence of side information, Bob simply guesses in non-increasing order of $p_X(\cdot)$.

In the case in which Bob's system $B$ is quantum, he is allowed to perform any local operations he wishes on $B$, and then make a first guess $x_1$. He is told by Alice whether or not his guess is correct; then he can perform local operations on $B$, and make another guess, and so forth. We are interested in determining the minimal number of guesses needed on average for a given ensemble $\{ p_X(x), \rho_B^x \}_{x\in \cX}$ and the associated optimal strategy.

More generally, we allow Bob to make $K$ guesses, with possibly $K < |\cX|$. Formally, we assume that Bob always makes all $K$ guesses; any guess after the correct guess simply does not factor into the calculation of the minimal number of guesses (see \Cref{sec:strategies} for a more detailed definition of the minimal number of guesses). Thus, Bob makes a sequence of guesses, $g_1,\dotsc,g_K \in \cX^K_{\neq}$ with some probability. 

We could consider the scenario in which Bob makes a guess $x_1$, then learns whether or not the guess was correct, and uses that information to make his second guess $x_2$, and so forth. However, if Bob learns that his $j$th guess $x_j$ is correct, then it does not matter what he guesses subsequently (it has no bearing on the minimal number of guesses). If the guess is incorrect, then his subsequent guesses do matter, and he should make his next guess accordingly. Hence, in such a protocol, the feedback about whether or not the $j$th guess is correct does not help, and Bob might as well assume that each guess is incorrect.


\section{Guessing strategies} \label{sec:strategies}

When Alice chooses $x^* \in \cX$, a guessing strategy for Bob outputs a sequence of guesses $\vec g = (g_1,\dotsc,g_K) \in \cX^K$ with some probability $p_{\vec G | X}(\vec g | x^*)$. Hence, formally, a \emph{guessing strategy for $X$ with $K$ guesses} is a random variable $\vec G$ on $\cX^K$ that is correlated with $X$, such that the joint random variable $(X, \vec G)$ has marginal $X \sim p_X$. Note that the definition of a guessing strategy has no reference to the side information (if any) to which Bob has access; instead, the side information dictates the set of guessing strategies to which Bob has access. This allows various types of side information to be analyzed within a uniform framework; in particular, the set of strategies available when Bob has access to some classical side information $Y$ is described in \Cref{sec:classical_strategies}, while the case of quantum side information is described in \Cref{sec:quantum_strategies}. 

We are interested in the minimal number of guesses required to guess $x^*$ correctly with a fixed sequence of guesses $\vec g$. This is defined as follows:
\begin{multline}\label{eq:def_N}
N(\vec g, x^*) := \\
\begin{cases}
\min \left\{ j :  g_j = x^*\right\} & g_j = x^* \text{ for some }j =1,\dotsc, K\\
\infty & \text{else},
\end{cases}
\end{multline}
where the outcome $\infty$ occurs when none of the $K$ guesses are correct. We can view $N$ as a random variable taking values in $\{1,2,\dotsc,K, \infty\}$. Given a guessing strategy $\vec G$, the quantity of interest is $N(\vec G, X)$, the corresponding random variable. We define
\begin{equation}
\cS_K(X) := \left\{ N(\vec G, X) : X\vec{G} \sim p_{X\vec{G}} \right\}
\end{equation}
to be the set of all possible random variables $N$ associated to all guessing strategies $\vec G$ with $K$ guesses.
We say two guessing strategies $\vec G$ and $\vec G'$ for $X$ with $K$ guesses are \emph{equivalent} if $N(\vec G', X) = N(\vec G, X)$.

Note that if $\vec G$ and $\vec G'$ are two strategies with $K$ guesses for $X$ that  differ only in guesses made after guessing the correct answer, then they are equivalent. This formalizes the notion introduced at the end of the previous section: since guesses made after the correct answer do not change the value of $N(\vec g, x^*)$, feedback of whether or not $g_j = x^*$ can only lead to equivalent strategies.

\subsection{Classical strategies} \label{sec:classical_strategies}

Consider a pair of random variables $(X,Y)$ where $X$ has a finite alphabet $\cX$ and $Y$ has a countable alphabet $\cY$. Alice chooses $x^* \in \cX$ (with probability $p_X(x^*)$) and Bob is given $y\in \cY$ (with probability $p_{Y|X}(y|x^*)$). Bob's task is to guess $x^*$. Since Bob's sequence of guesses $(g_1,\dotsc,g_K)$ can only depend on $x^*$ via $y$, a classical guessing strategy $\vec G$ is any random variable $\vec G$ such that the ordered triple $(X,Y,\vec G)$ of random variables  forms a Markov chain, which we denote as $X - Y - \vec G$. Hence, given a joint probability distribution $p_{XY}$, we define the set of random variables $N$ associated to classical guessing strategies as follows:
\begin{equation}
\cS_K^\textnormal{Classical}(p_{XY}) := \left\{ N(\vec G, X) : X - Y - \vec G \right\} \subseteq \cS_K(X).
\end{equation}

\subsection{Equivalence of quantum strategies} \label{sec:quantum_strategies}

Let us consider three classes of quantum strategies:
\begin{enumerate}
\item Measured strategy: Bob performs an arbitrary POVM $\{E_y\}_{y \in \cY}$ on the $B$ system. Let $Y$ be the random variable with outcomes in a finite alphabet $\cY$ corresponding to his measurement outcomes, {{i.e.}}
\begin{equation}\label{eq:Y-from-X}
  p_{Y|X}(y|x) = \tr[E_{y} \rho_B^x], \quad \forall x\in \cX,\, y \in \cY.
  \end{equation}
 Bob then employs a classical guessing strategy on $(X,Y)$. The set of random variables corresponding to the possible number of guesses under such a strategy is given by

\begin{multline}
\cS_K^\textnormal{Measured}(\rho_{XB}) := \Big\{ N(\vec G, X) : X - Y - \vec G, \\ \text{ $Y$ satisfies \eqref{eq:Y-from-X} for finite alphabet } \cY \\  \text{\& POVM } \{E_y\}_{y \in \cY} \Big\}.
\end{multline}

We then observe that
\begin{equation}
\cS_K^\textnormal{Measured}(\rho_{XB}) \subseteq \cS_K(X).    
\end{equation}

\item Ordered strategy: Bob performs a measurement with outcomes in $\cX^K$, which are identified with guessing orders; {{i.e.}}, if the outcome is $(x_1,\dotsc,x_K) \in \cX^K$, Bob first guesses $x_1$, then $x_2$, and so forth. In this case, Bob performs a POVM $\{E_{\vec g}\}_{\vec g \in \cX^K}$ and the guessing strategy $\vec G$ is distributed according to
\begin{equation}\label{eq:G-dist-ordered}
p_{\vec G | X}(\vec g | x) =  \tr[E_{\vec g} \rho_B^{x}].
\end{equation}
As above, we define
\begin{multline}
\cS_K^\textnormal{Ordered}(\rho_{XB}) := \Big\{ N(\vec G, X) : (\vec G, X) \text{ satisfy} \\  \text{\eqref{eq:G-dist-ordered} for some POVM } \{E_{\vec g}\}_{\vec g \in \cX^K}\Big\}
\subseteq \cS_K(X).
\end{multline}
It is evident that
\begin{equation}
\cS_K^\textnormal{Ordered}(\rho_{XB}) \subseteq \cS_K^\textnormal{Measured}(\rho_{XB})
\end{equation}
because any such ordered strategy is a special type of measured strategy (with $Y = \vec G$). However, any measured strategy can in fact be simulated by an ordered strategy. Suppose we have a measured strategy with alphabet~$\cY$, POVM $\{E_y\}_{y\in \cY}$, and $\vec G$ satisfying $X-Y-\vec G$. Then 
\begin{align} \label{eq:meas_by_ordered_step}
p_{\vec G|X}( \vec g|x) &= \sum_{y \in \cY} p_{\vec G|Y}( \vec g|y) p_{Y|X}(y | x) \\
&=  \sum_{y \in \cY} p_{\vec G| Y}(\vec g|y) \tr[E_{y} \rho_B^x],
\end{align}
where we have used the Markov property for the first equality and \eqref{eq:Y-from-X} for the second equality. 

Let $\tilde E_{\vec g} := \sum_{y\in \cY} p_{\vec G|Y}(\vec g|y) E_{y}$. Note $\{\tilde E_{\vec g}\}_{\vec g \in \cX^K}$ is a POVM: each element is positive semi-definite since $\{E_y\}_{y \in \cY}$ is a POVM, and 
\begin{align}
\sum_{\vec g \in \cX^K} E_{\vec g} &=\sum_{\vec g \in \cX^K}  \sum_{y\in \cY}  p_{\vec G|Y}(\vec g|y) E_{y} \\ 
&=   \sum_{y\in \cY}  \sum_{\vec g \in \cX^K}p_{\vec G|Y}(\vec g|y)E_{y} = \sum_{y \in \cY} E_y = \one_B,
\end{align}
using again that $\{E_y\}_{y \in \cY}$ is a POVM.
Then substituting the definition of $\tilde E_{\vec g}$ into \eqref{eq:meas_by_ordered_step} yields
\begin{equation}
p_{\vec G|X}(\vec g|x) =  \tr[\tilde E_{\vec g} \rho_B^x]
\end{equation}
and hence \eqref{eq:G-dist-ordered} is satisfied with $E = \tilde E$. Therefore,
\begin{equation}
\cS^\textnormal{Ordered}(\rho_{XB}) = \cS^\textnormal{Measured}(\rho_{XB}).
\label{eq:ordered-=-measured}
\end{equation}

\item \label{it:sequential} Sequential quantum strategy:  
Suppose that Alice chooses $x$ (which occurs with probability $p_X(x)$), and hence Bob has the state $\rho_B^x$. To make his first guess, Bob chooses a set of generalized measurement operators $\{ M_x^{(1)}\}_{x\in \cX}$ and reports the measurement outcome as his guess. He gets outcome $x_1$ with probability
\begin{equation}
p_{G_1|X}(x_1|x)= \tr[ M_{x_1}^{(1)} \rho_B^x  M_{x_1}^{(1)}{}^\dagger]
\end{equation}
and his post-measurement state is
\begin{equation}
\frac{1}{p_{G_1|X}(x_1|x)} M_{x_1}^{(1)} \rho_B^x  M_{x_1}^{(1)}{}^\dagger.
\end{equation}

{\em{Note:}} in general, Bob could perform a unitary operation $U_1$ on his state before measuring it. However, this would simply correspond to measuring with $\{M_{x}^{(1)} U_1\}_{x \in \cX}$ instead. Hence, it suffices to simply consider a generalized measurement $\{ M_x^{(1)}\}_{x\in \cX}$.

Then, after learning the outcome $x_1$, Bob chooses a new set of generalized measurement operators $\{ M_x^{(2 | x_1)}\}_{x\in \cX}$. Note that this set of measurement operators can depend on $x_1$. Without loss of generality, we can keep the same outcome set $\cX$, since Bob could set, {{e.g.}}~$M_{x_1}^{(2 | x_1)} = 0$ to avoid guessing the same number twice. Bob measures his state and gets the outcome $x_2$ with probability
\begin{multline}
p_{G_2|G_1 X}(x_2|x_1, x) \\
= \frac{1}{p_{G_1|X}(x_1|x)} \tr[M_{x_2}^{(2| x_1)} M_{x_1}^{(1)} \rho_B^x  M_{x_1}^{(1)}{}^\dagger M_{x_2}^{(2| x_1)}{}^\dagger].
\end{multline}
Multiplying by $p_{G_1|X}(x_1|x)$ we see the joint distribution is given by
\begin{multline}
p_{G_1 G_2|X}(x_1, x_2|x)  \\ 
= \tr[M_{x_2}^{(2| x_1)} M_{x_1}^{(1)} \rho_B^x  M_{x_1}^{(1)}{}^\dagger M_{x_2}^{(2| x_1)}{}^\dagger].
\end{multline}
To make his $j$th guess, we allow Bob to choose a new set of generalized measurement operators\\$\{M_x^{(j|x_1,\dotsc,x_{j-1})}\}_{x\in \cX}$,  which may depend on the previous $j-1$ outcomes. Repeating the previous logic, in the end we find that
\begin{multline} \label{eq:general_prob_G}
p_{G_1 G_2 \dotsm G_K  |X}(x_1, x_2,\dotsc, x_K|x) = \\
\tr[M_{x_K}^{(K| x_1, x_2,\dotsc,x_{K-1})}  \dotsm M_{x_2}^{(2| x_1)} M_{x_1}^{(1)} \rho_B^x \\
M_{x_1}^{(1)}{}^\dagger M_{x_2}^{(2| x_1)}{}^\dagger\dotsm M_{x_K}^{(K| x_1, x_2,\dotsc,x_{K-1})}{}^\dagger].
\end{multline}
Under such a strategy, the possible random variables giving the number of guesses is given by
\begin{multline}
\cS^{\textnormal{Sequential}}(\rho_{XB}) := \Big\{ N(\vec G, X) : \\
 (\vec G, X) \text{ satisfy \eqref{eq:general_prob_G} for some collections of} \\ 
\text{measurement operators } \{M_{x_j}^{(j| x_1, x_2,\dotsc,x_{j-1})}\}_{x_j \in \cX},\\
\, j =1,\dotsc,K,\,\, x_1,x_2,\dotsc, x_K \in \cX \Big\}.
\end{multline}

\end{enumerate}

\begin{theorem}\label{thm:ordered-from-adaptive}
Let $\rho_{XB}$ be a c-q state as defined in \eqref{eq:cq-state} and $K$ a natural number with $K \leq |\mathcal{X}|$. Then
\begin{equation}\label{eq:all-q-strat-equiv}
 \cS^{\textnormal{Sequential}}_K(\rho_{XB}) = \cS^{\textnormal{Ordered}}_K(\rho_{XB}) = \cS^{\textnormal{Measured}}_K(\rho_{XB}).
\end{equation}
\end{theorem}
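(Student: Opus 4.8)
The plan is to prove the chain of inclusions $\cS^{\textnormal{Sequential}}_K(\rho_{XB}) \subseteq \cS^{\textnormal{Ordered}}_K(\rho_{XB}) \subseteq \cS^{\textnormal{Sequential}}_K(\rho_{XB})$ and to combine it with the equality $\cS^{\textnormal{Ordered}}_K(\rho_{XB}) = \cS^{\textnormal{Measured}}_K(\rho_{XB})$ already established in~\eqref{eq:ordered-=-measured}, which together give~\eqref{eq:all-q-strat-equiv}. For each inclusion it suffices to convert a strategy of one type into a strategy of the other type that induces the \emph{same} conditional distribution $p_{\vec G|X}$: since $N(\vec G,X)$ is a deterministic function of the pair $(\vec G,X)$ and the marginal $p_X$ is fixed throughout, reproducing $p_{\vec G|X}$ reproduces the random variable $N(\vec G,X)$.

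\emph{Sequential $\subseteq$ Ordered.} Starting from a sequential strategy with operators $\{M_{x_j}^{(j| x_1,\dotsc,x_{j-1})}\}$, define for each $\vec g=(g_1,\dotsc,g_K)\in\cX^K$ the composite operator and candidate POVM element
\[ A_{\vec g} := M_{g_K}^{(K| g_1,\dotsc,g_{K-1})}\dotsm M_{g_2}^{(2| g_1)}M_{g_1}^{(1)}, \qquad E_{\vec g} := A_{\vec g}^\dagger A_{\vec g}\ge 0. \]
Then~\eqref{eq:general_prob_G} becomes $p_{\vec G|X}(\vec g|x)=\tr[A_{\vec g}\rho_B^x A_{\vec g}^\dagger]=\tr[E_{\vec g}\rho_B^x]$, which is precisely~\eqref{eq:G-dist-ordered}. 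It then remains to check the completeness relation $\sum_{\vec g\in\cX^K}E_{\vec g}=\one_B$: performing the sum coordinate by coordinate, the innermost sum over $g_K$, namely $\sum_{g_K}M_{g_K}^{(K| g_1,\dotsc,g_{K-1})\dagger}M_{g_K}^{(K| g_1,\dotsc,g_{K-1})}=\one$ (completeness of the $K$th generalized measurement, for each fixed history $g_1,\dotsc,g_{K-1}$), collapses the outermost pair of operators to the identity; iterating this telescoping through $g_{K-1},\dotsc,g_1$ leaves $\one_B$. Hence $\{E_{\vec g}\}_{\vec g\in\cX^K}$ is a POVM defining an ordered strategy with the same $p_{\vec G|X}$.

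\emph{Ordered $\subseteq$ Sequential.} Here the point is that Bob can carry out a single POVM $\{E_{\vec g}\}_{\vec g\in\cX^K}$ by means of a Naimark dilation and then reveal its outcome one coordinate at a time. Introduce ancillary registers $A_1,\dotsc,A_K$, each with orthonormal basis $\{|x\rangle\}_{x\in\cX}$, and the isometry
\[ V := \sum_{\vec g\in\cX^K}\sqrt{E_{\vec g}}\otimes|g_1\rangle_{A_1}\otimes\dotsm\otimes|g_K\rangle_{A_K} \]
from $\cH_B$ to $\cH_B\otimes\cH_{A_1}\otimes\dotsm\otimes\cH_{A_K}$, which is indeed an isometry since $V^\dagger V=\sum_{\vec g}E_{\vec g}=\one_B$. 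Take the sequential strategy whose first measurement applies $V$ and then reads register $A_1$, i.e.\ $M^{(1)}_{x_1}:=(\one_B\otimes\langle x_1|_{A_1}\otimes\one_{A_2\dotsm A_K})V$, and whose $j$th measurement ($2\le j\le K$) simply reads register $A_j$, i.e.\ $M^{(j| x_1,\dotsc,x_{j-1})}_{x_j}:=\one\otimes\langle x_j|_{A_j}\otimes\one$. Each step is a valid generalized measurement, and the composite operator telescopes to $M^{(K)}_{g_K}\dotsm M^{(1)}_{g_1}=(\one_B\otimes\langle g_1|_{A_1}\otimes\dotsm\otimes\langle g_K|_{A_K})V=\sqrt{E_{\vec g}}$; substituting into~\eqref{eq:general_prob_G} gives $p_{\vec G|X}(\vec g|x)=\tr[E_{\vec g}\rho_B^x]$, matching the ordered strategy.

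I do not expect a genuine obstacle here: the first inclusion is essentially the remark that a composition of generalized-measurement operators is again a generalized-measurement operator, and the second is a standard dilation argument. The step needing the most care is the bookkeeping in the Naimark construction---tracking which Hilbert space each $M^{(j)}$ acts on as the ancilla registers $A_1,\dotsc,A_K$ are successively consumed (equivalently, noting that adjoining these ancilla registers to $B$ and then working with the enlarged system is itself one of the ``local operations on $B$'' available to Bob), and verifying that the partial sums appearing in~\eqref{eq:general_prob_G} are resolutions of the identity on the correct spaces---but this is routine. Together with~\eqref{eq:ordered-=-measured}, the two inclusions yield~\eqref{eq:all-q-strat-equiv}.
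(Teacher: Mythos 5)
Your proof is correct. The first inclusion (Sequential $\subseteq$ Ordered) is essentially identical to the paper's argument: both form $E_{\vec g}=A_{\vec g}^\dagger A_{\vec g}$ from the composite operator and verify completeness by telescoping the sum from $g_K$ down to $g_1$. Where you genuinely diverge is the converse direction. The paper builds the sequential strategy \emph{on the original system $B$} by a recursive square-root construction: $M^{(1)}_{x_1}=\sqrt{\sum_{x_2,\dotsc,x_K}E_{x_1,\dotsc,x_K}}$, with later operators obtained by conjugating partial sums of the $E_{\vec g}$ by inverses of the earlier operators so that the composite product reproduces $E_{\vec g}$ exactly (strictly, this needs pseudo-inverses on the relevant supports when those partial sums are singular, a point the paper glosses over). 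You instead use a Naimark dilation with ancilla registers $A_1,\dotsc,A_K$ and read the outcome off one coordinate at a time. Your route sidesteps the invertibility issue entirely and makes the telescoping identity $M^{(K)}_{g_K}\dotsm M^{(1)}_{g_1}=\sqrt{E_{\vec g}}$ immediate; its cost is that the measurement operators map $\cH_B$ into the enlarged space $\cH_B\otimes\cH_{A_2}\otimes\dotsm\otimes\cH_{A_K}$, so you must (as you explicitly do) read the definition of a sequential strategy as permitting ancillas. That reading is physically standard and consistent with the paper's framing of ``local operations on $B$,'' but note that your construction requires carrying a quantum memory of dimension growing with $|\cX|^K$ between guesses, whereas the paper's construction keeps the post-measurement state on $B$ itself, which is what underwrites its claim that the exponential-outcome POVM can be implemented efficiently as a sequence of small measurements. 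Both arguments establish the set equality and hence the theorem.
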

We see that all three sets of random variables of the number of guesses obtained from various classes of strategies all coincide. Hence, we call the single class that of {\em{quantum strategies}}, denoted as $\cS^\textnormal{Quantum}_K(\rho_{XB})$.

\begin{proof}
The second equality  was already stated in \eqref{eq:ordered-=-measured} and proven before that, and so it remains to prove the first equality.
Consider a sequential strategy, with the notation of point \ref{it:sequential} above.
Define
\begin{multline} \label{eq:E_from_M}
E_{x_1,\dotsc,x_K} :=  M_{x_1}^{(1)}{}^\dagger M_{x_2}^{(2| x_1)}{}^\dagger\dotsm M_{x_K}^{(K| x_1, x_2,\dotsc,x_{K-1})}{}^\dagger  \\ 
M_{x_K}^{(K| x_1, x_2,\dotsc,x_{K-1})}  \dotsm M_{x_2}^{(2| x_1)} M_{x_1}^{(1)} .
\end{multline}

We see that $E_{x_1,\dotsc,x_K} = A^\dagger A$ for $A=M_{x_K}^{(K| x_1, x_2,\dotsc,x_{K-1})}  \dotsm M_{x_2}^{(2| x_1)} M_{x_1}^{(1)}$, and hence it is positive semi-definite. Moreover,
\begin{align}	
\sum_{x_1,\dotsc,x_K \in \cX} E_{x_1,\dotsc,x_K} = I_{B}
\end{align}
as can be seen by first summing \eqref{eq:E_from_M} over $x_K$, using 
\begin{equation}
\sum_{x_K \in \cX} M_{x_K}^{(K| x_1, x_2,\dotsc,x_{K-1})}{}^\dagger M_{x_K}^{(K| x_1, x_2,\dotsc,x_{K-1})}  = I_B
\end{equation}
since $\{M_{x}^{(K| x_1, x_2,\dotsc,x_{K-1})}\}_{x\in \cX}$ is a set of generalized measurement operators, and then similarly summing over $x_{K-1}$, $x_{K-2}$,\ldots, and finally $x_1$. Let us write $E_{\vec x}$ where $\vec x=(x_1,\dotsc,x_K)$ for $E_{x_1,
\dotsc, x_K}$. We have shown that $\{ E_{\vec x} \}_{\vec x \in \cX^K}$ is a POVM.
 Moreover,
\begin{equation}
p_{G_1 G_2 \dotsm G_K  |X}(x_1, x_2 ,\dotsc, x_K|x) = \tr[E_{x_1,\dotsc,x_K} \rho_B^x].
\end{equation}
Hence, Bob's strategy is equivalent to simply performing the single POVM $\{ E_{\vec x} \}_{\vec x \in \cX^K}$ once, obtaining an outcome $\vec x = (x_1,\dotsc, x_K)$, and then making $x_1$ his first guess, $x_2$ his second guess, and so forth. That is, any such strategy can be recast as an ordered strategy.

 On the other hand, any such ordered strategy can be reformulated as an adaptive strategy, by the following recursive approach. Suppose that we are given $\left\{ E_{\vec y} \right\}_{\vec y \in \cX^K}$.
 For each $x_1 \in \cX$, define
 \begin{equation}
 M^{(1)}_{x_1} = \sqrt{\sum_{x_2,\dotsc,x_K\in \cX} E_{x_1,\dotsc,x_K}}
 \end{equation}
 where we have chosen the positive semi-definite square root. We have that
 \begin{align}
 \sum_{x_1\in \cX} M^{(1)}_{x_1}{}^\dagger M^{(1)}_{x_1} &=  \sum_{x_1 \in \cX} (M^{(1)}_{x_1})^2 \\
 &= \sum_{x_1\in \cX} \sum_{x_2,\dotsc,x_K \in \cX} E_{x_1,\dotsc,x_K} = I_B,
 \end{align}
 and so $\left\{  M^{(1)}_{x_1} \right\}_{x_1 \in \cX}$ forms a set of generalized measurement operators with outcomes in $\cX$. Next, for each $x_1 \in \cX$, corresponding to obtaining outcome $x_1$ on the first measurement, we define measurement operators $\{M^{(2| x_1)}_{x_2}\}_{x_2 \in \cX}$ by
 \begin{equation}
 M^{(2| x_1)}_{x_2} = \sqrt{(M^{(1)}_{x_1})\inv\sum_{x_3,\dotsc,x_K \in \cX} E_{x_1,\dotsc,x_K}(M^{(1)}_{x_1})\inv}.
 \end{equation}
 Then
\begin{align}	
 &\sum_{x_2\in \cX} (M^{(2| x_1)}_{x_2})^2 \\ 
 & \qquad = (M^{(1)}_{x_1})\inv\sum_{x_2\in \cX}\sum_{x_3,\dotsc,x_K\in \cX} E_{x_1,\dotsc,x_K}(M^{(1)}_{x_1})\inv \\
 & \qquad = (M^{(1)}_{x_1})\inv (M^{(1)}_{x_1})^2(M^{(1)}_{x_1})\inv \\
 & \qquad = I_B.
\end{align}
Likewise, we define
\begin{multline}
M^{(3| x_1, x_2)}_{x_3} = \Big\{ (M^{(2| x_1)}_{x_2})\inv(M^{(1)}_{x_1})\inv\sum_{x_4,\dotsc,x_K\in \cX} \\ E_{x_1,\dotsc,x_K}(M^{(1)}_{x_1})\inv(M^{(2| x_1)}_{x_2})\inv \Big\}^{1/2}.
\end{multline}

Then
\begin{align}
 & \sum_{x_3\in \cX} (M^{(3| x_1, x_2)}_{x_3})^2 \\
 &  = \begin{multlined}[t][5cm] (M^{(2| x_1)}_{x_2})\inv(M^{(1)}_{x_1})\inv \left( \sum_{x_3\in \cX}\sum_{x_4,\dotsc,x_K\in \cX} E_{x_1,\dotsc,x_K} \right) \\ (M^{(1)}_{x_1})\inv(M^{(2| x_1)}_{x_2})\inv  \end{multlined} \\
 & = (M^{(2| x_1)}_{x_2})\inv (M^{(2| x_1)}_{x_2})^2 (M^{(2| x_1)}_{x_2})\inv \\
 &= I_B.
\end{align}
Repeating this process, we define
\begin{multline}	
  M^{(\ell| x_1, x_2, \dotsc, x_{\ell-1})}_{x_\ell} \\
 =  \Big\{(M^{(\ell-1| x_1, x_2, \dotsc, x_{\ell-2})}_{x_{\ell-1}})\inv\dotsm (M^{(1)}_{x_1})\inv \\ 
 \sum_{x_{\ell+1},\dotsc,x_K\in\cX} E_{x_1,\dotsc,x_K} \\ 
 (M^{(1)}_{x_1})\inv\dotsm M^{(\ell-1| x_1, x_2, \dotsc, x_{\ell-2})}_{x_{\ell-1}})\inv \Big\}^{1/2}
\end{multline}
to obtain a generalized measurement operator for step $\ell$ (to use when having obtained outcomes $x_1,\dotsc, x_{\ell-1}$ during the previous steps).
At the last step, $\ell = K$, there is no sum, namely 
\begin{multline}	
  M^{(K| x_1, x_2, \dotsc, x_{K-1})}_{x_K} \\
 =  \Big\{(M^{(K-1| x_1, x_2, \dotsc, x_K-2)}_{x_{K-1}})\inv\dotsm (M^{(1)}_{x_1})\inv E_{x_1,\dotsc,x_K}  \\
 (M^{(1)}_{x_1})\inv\dotsm M^{(K-1| x_1, x_2, \dotsc, x_K-2)}_{x_{K-1}})\inv\Big\}^{1/2}.
\end{multline}
Lastly, we check that by design, \eqref{eq:E_from_M} holds. Thus, we can work backwards from that equation and see that our newly created adaptive strategy yields the same outcomes with the same probabilities as the initial ordered strategy.
\end{proof}

\subsection{Success metrics}

Given a random variable $X$ and a maximal number $K$ of allowed guesses, how do we measure the success of a guessing strategy $\vec G$? We will focus on expectations of $N(\vec G, X)$. In particular, we consider the expected number of guesses required to guess correctly:
\begin{equation} \label{eq:expected_guess_inf}
\bE[N(\vec G, X)] = \begin{cases} \sum_{k = 1}^K k\cdot p_{N(\vec G, X)}(k) & \!\!\!\text{if } p_{N(\vec G, X)}(\infty) = 0\\
\infty & \!\!\!\text{if } p_{N(\vec G, X)}(\infty) > 0.
\end{cases}
\end{equation}
Here, $p_{N(\vec G, X)}(k)$ is the probability that, for guessing strategy $\vec G$, the $k$th guess is correct.
We also consider a general cost vector $\vec c$ $ = \{ c_1, c_2, \dotsm ,c_{| \cX |}\} $ $\in (\bR\cup \{\infty\})^{|\cX|}$  with
\begin{equation}\label{eq:QSI_cost_vector}
0 \leq c_1 \leq c_2 \leq \dotsm \leq c_{|\cX|}.
\end{equation}
Then we define the modified expectation
\[
E_{\vec c}(N(\vec G, X)) := \sum_{k = 1}^{|\cX|} c_k\cdot p_{N(\vec G, X)}(k).
\]
Imposing a maximal number $K < |\cX|$ of allowed guesses is equivalent to choosing $c_{K+1} = \dotsm = c_{|\cX|} = \infty$, using the convention $\infty \cdot 0 = 0$. Accordingly, we implicitly associate $K$ with $\vec c$ in all the following via the rule that $K = |\cX|$ if and only if $c_{|\cX|}<\infty$, and otherwise $K=\min \{ i : c_i = \infty \}$. The case $K=|\cX|$ therefore corresponds to $|\cX|$ guesses being allowed, each with finite cost, and the case $K < |\cX|$ corresponds to a limited number of allowed guesses, with a corresponding infinite cost if the correct answer is not obtained in $K$ guesses.

Given a c-q state $\rho_{XB}$ and a cost vector $\vec c$ as in \eqref{eq:QSI_cost_vector}, we define the generalized guesswork with quantum side information as
\begin{equation}\label{eq:def_Ec-quantum}
G_{\vec c}(X|B)_\rho := \inf_{N \in \cS_K^\textnormal{Quantum}(\rho_{XB})} E_{\vec c}(N).
\end{equation}
Likewise, given a joint distribution $p_{XY}$, let
\begin{equation}
G_{\vec c}(X|Y)_p := \inf_{N \in \cS_K^\textnormal{Classical}(p_{XY})} E_{\vec c}(N).
\end{equation}
From the equality
\begin{equation}
\cS_K^\textnormal{Quantum}(\rho_{XB}, K) = \cS_K^\textnormal{Measured}(\rho_{XB}) 
\end{equation}
of \Cref{thm:ordered-from-adaptive} it follows that
\begin{equation} \label{eq:Ec_q_from_c}
 G_{\vec c}(X|B)_\rho = \inf_{\{E_y\}_{y \in \cY}} G_{\vec c}(X|Y)_p
\end{equation}
where the infimum is over all finite alphabets $\cY$ and POVMs  $\{E_y\}_{y \in \cY}$ and $p_{XY}(x, y) = p_X(x) \tr[E_y \rho_B^x]$.

In the standard case in which $\vec c = (1,2,\dotsc,|\cX|)$, we define the \emph{guesswork with quantum side information} as
\begin{equation} \label{eq:def_G}
G(X|B) \equiv G(X|B)_\rho := G_{\vec c}(X|B)_\rho
\end{equation}
and likewise define $G(X|Y)_p = G_{\vec c}(X|Y)_p$ in the case of classical side information $Y$.

\begin{remark}\label{rem:rank_1}
In Ref.~\cite{CCWF15}, guesswork with quantum side information was defined by the right-hand side of \eqref{eq:Ec_q_from_c} (with $\vec c = (1,2,\dotsc,|\cX|)$). Moreover, Proposition 1 of that work shows that the infimum in \eqref{eq:Ec_q_from_c} in that case may be restricted to POVMs whose elements are all rank~one.
\end{remark}



\section{Entropic bounds} \label{sec:entropic_bounds}

In this section, we use the results of \Cref{sec:strategies} to obtain one-shot and asymptotic entropic bounds on $G(X|B)$ in terms of measured versions of bounds known in the classical case. 

\subsection{One-shot bounds}



In the case in which $K = |\cX|$, Arikan~\cite{Arikan1996} showed that 
\begin{equation} \label{eq:Arikan-1shot-bound}
\frac{1}{1 + \ln|\cX|} \exp(H_{\frac{1}{2}}^\uparrow(X|Y)_p) \leq G(X|Y)_p \leq \exp(H_{\frac{1}{2}}^\uparrow(X|Y)_p)
\end{equation}
where $H_{\alpha}^\uparrow(X|Y)_p$ for $\alpha\in (0,1)\cup(1,\infty)$ denotes the following $\alpha$-conditional entropy of a joint distribution $p_{XY}$ given by
\begin{align}  \label{eq:def_CEalpha_classical}
H_{\alpha}^\uparrow(X|Y) &= \frac{\alpha}{1-\alpha}\ln \left( \sum_{y \in \cY} \left( \sum_{x\in \cX} p_{XY}(x,y)^\alpha \right)^{1/\alpha} \right) \\
&= \sup_{q_Y} \left[- D_\alpha( p_{XY} \| \one_X \otimes q_Y )\right]
\end{align}
where the supremum is over probability distributions $q_Y$ on $\cY$, and $D_\alpha$ is the $\alpha$-R\'enyi relative entropy,
\begin{equation}
D_\alpha(p_X \| q_X) = \frac{1}{\alpha-1} \ln \left( \sum_x p_X(x)^{\alpha} q_X(x)^{1-\alpha} \right).
\end{equation}
The second equality of \eqref{eq:def_CEalpha_classical} follows from \cite[Theorem~4]{FB14}.

Arikan's bound \eqref{eq:Arikan-1shot-bound} applies to each $G_{\vec c}(p_{XY},K)$ in \eqref{eq:Ec_q_from_c}, and hence by minimizing over the POVMs $\{E_y\}_{y\in \cY}$, we obtain
\begin{align} \label{eq:1-shot-bounds}
\frac{1}{1 + \ln|\cX|} \exp(H_{\frac{1}{2}}^{\uparrow, M}(X|B)_{\rho}) &\leq  G(X|B)_\rho \\
&\leq \exp(H_{\frac{1}{2}}^{\uparrow, M}(X|B)_\rho),
\end{align}
where for $\alpha \in (0,1)\cup(1,\infty)$, $H_{\alpha}^{\uparrow, M}(X|B)_\rho$ is the $B$-measured conditional $\alpha$-R\'enyi entropy, defined by
\begin{equation}
H_{\alpha}^{\uparrow, M}(X|B)_\rho := \inf_{\{E_y\}_{y \in \cY}} H_{\alpha}^\uparrow(X|Y)_p,
\end{equation}
where $p_{XY}(x,y) = p_X(x) \tr[E_y \rho_B^x]$ is the joint probability distribution obtained by measuring the $B$ part of $\rho_{XB}$ via $\{E_y\}_{y \in \cY}$.

\begin{remark}
We may expand this quantity as
\begin{equation}
H_{\alpha}^{\uparrow, M}(X|B)_\rho = \inf_{\{E_y\}_{y \in \cY}}\sup_{q_Y} \left[ -D_{\alpha}(p_{XY} \| \one_X \otimes q_Y)\right],
\end{equation}
where $p_{XY}$ is induced by the measurement of $\rho_{XB}$.
This quantity seems to be different from the \emph{conditional entropy induced by the measured R\'enyi divergence}, namely
\begin{equation}
 H_{D_\alpha^M}^\uparrow(X|B)_\rho := \sup_{\sigma_B}-D^M_\alpha( \rho_{XB} \| \one_X \otimes \sigma_B),
\end{equation}
where the supremum is over states on the $B$ system, and for any pair of states $(\rho, \sigma)$,
\begin{equation}
D^M_\alpha(\rho\| \sigma) := \sup_{\{E_z\}_{z}} D_\alpha ( \{\tr[E_z \rho]\}_{z} \| \{\tr[E_z \sigma]\}_{z} )
\end{equation}
is the \emph{measured $\alpha$-R\'enyi divergence}. Indeed, the latter quantity may be expanded to obtain
\begin{multline}
 H_{D_\alpha^M}^\uparrow(X|B)_\rho \\
 = \sup_{\sigma_B} \inf_{\{E_z\}_{z}} \left[ -D_\alpha ( \{\tr[E_z \rho_{XB}]\}_{z} \| \{\tr[E_z \one_X \otimes \sigma_B]\}_{z} )\right].
\end{multline}

From the min-max inequality, and the fact that collective measurements on $XB$ can simulate measurements on $B$ alone, we have
\begin{equation}
 H_{D_\alpha^M}^\uparrow(X|B)_\rho  \leq H_{\alpha}^{\uparrow, M}(X|B)_\rho.
\end{equation}
\end{remark}
\subsection{Asymptotic analysis}
We can consider the asymptotic setting in which Bob receives a sequence of product states $\rho_B^{\vec x} := \rho_B^{x_1} \otimes \dotsm \otimes \rho_B^{x_n}$, with probability $p_X(x_1)\dotsm p_X(x_n)$ and aims to guess the full sequence $\vec x = (x_1,\dotsc, x_n)$. In this case, the problem is characterized by the c-q state $\rho_{XB}^{\otimes n}$. The 1-shot bounds \eqref{eq:1-shot-bounds} give us
\begin{align}
 & -\frac{1}{n}\ln\left(1 + n\ln(|\cX|)\right) + \frac{1}{n}H_{\frac{1}{2}}^{\uparrow, M}(X^n | B^n)_{\rho^{\otimes n}} \\ 
 & \qquad \leq   \frac{1}{n}\ln G(X^n|B^n)_{\rho^{\otimes n}} \\
 & \qquad \leq  \frac{1}{n}H_{\frac{1}{2}}^{\uparrow, M}(X^n | B^n)_{\rho^{\otimes n}}
\end{align}
where $H_{\frac{1}{2}}^{\uparrow, M}(X^n | B^n)_{\rho^{\otimes n}}$ can involve collective measurements on the system $B^n$. Taking $n\to \infty$, we obtain
\begin{equation} \label{eq:asymptotic-equality}
     \lim_{n\to\infty} \frac{1}{n}\ln G(X^n|B^n)_{\rho^{\otimes n}} = \lim_{n\to\infty} \frac{1}{n} H_{\frac{1}{2}}^{\uparrow, M}(X^n | B^n)_{\rho^{\otimes n}},
\end{equation}
assuming that the limit on the right-hand side exists.

Note that we can bound
\begin{align}
\frac{1}{n} H_{\frac{1}{2}}^{\uparrow, M}(X^n | B^n)_{\rho^{\otimes n}}
&\leq \frac{1}{n} \inf_{\{ E_{y} \}_{y\in\cY}} H_{\frac{1}{2}}^\uparrow(X^n|Y^n)_{p^{\otimes n}}\\
&=  \inf_{\{ E_{y} \}_{y\in\cY}} H_{\frac{1}{2}}^\uparrow(X|Y)_{p} \\
&= H_{\frac{1}{2}}^{\uparrow, M}(X | B)_{\rho}
\end{align}
where the first inequality follows from the fact that product measurements are a special case of collective measurements, and the first equality follows from the additivity of the classical R\'enyi entropy (\cite[Proposition 1]{Arikan1996}), and the third by the definition of $H_{\frac{1}{2}}^{\uparrow, M}(X | B)_{\rho}$. Moreover, by the data-processing inequality \cite{FL13},
\begin{equation}
\label{eq:relate-ce-to-asymptotic-measured-ce}
\frac{1}{n} H_{\frac{1}{2}}^{\uparrow, M}(X^n | B^n)_{\rho^{\otimes n}} \geq \frac{1}{n} \widetilde H_{\frac{1}{2}}^{\uparrow}(X^n | B^n)_{\rho^{\otimes n}} = \widetilde H_{\frac{1}{2}}^{\uparrow}(X | B)_{\rho},
\end{equation}
where the conditional R\'enyi entropy
$\widetilde H_{\alpha}^{\uparrow}(C | D)_{\sigma}$ of a bipartite state $\sigma_{CD}$ is defined as
\begin{equation}
\widetilde H_{\alpha}^{\uparrow}(C | D)_{\sigma}
= \sup_{\omega_D} \left[-\widetilde D_{\alpha} (\sigma_{CD} \Vert \one_C \otimes \omega_D)\right],
\end{equation}
with the optimization with respect to states $\omega_D$
and the sandwiched R\'enyi relative entropy defined as \cite{MDS+13,WWY14}:
\begin{equation}
    \widetilde D_{\alpha}(X \Vert Y) = \frac{1}{\alpha-1}
    \ln \tr[ (Y^{(1-2\alpha)/\alpha} X Y^{(1-2\alpha)/\alpha})^\alpha].
\end{equation}
The equality in \eqref{eq:relate-ce-to-asymptotic-measured-ce} follows from the additivity of $\widetilde H_{\frac{1}{2}}^{\uparrow}$ under tensor products (see, {{e.g.}}, \cite[Corollary~5.2]{Tom16}). Hence, we obtain
\begin{equation} \label{eq:asymptotic-bounds}
 \widetilde H_{\frac{1}{2}}^{\uparrow}(X | B)_{\rho} \leq \lim_{n\to\infty} \frac{1}{n}\ln G(X^n|B^n)_{\rho^{\otimes n}}  \leq  H_{\frac{1}{2}}^{\uparrow, M}(X | B)_{\rho}.
\end{equation}
In the classical case, \eqref{eq:cq-classical}, both the left and right-hand sides reduce to
\begin{equation}
H_{\frac{1}{2}}^{\uparrow}(X | Y)_{p}
\end{equation}
where $p$ is the underlying classical distribution of \eqref{eq:cq-classical}. Hence, these bounds recover Proposition 5 of~\cite{Arikan1996}.

\section{Semi-definite optimization representations and their consequences} \label{sec:SDP}

The task of calculating $G_{\vec c}(X|B)_\rho$ as defined in \eqref{eq:def_Ec-quantum} can be written as a semi-definite optimization problem, as was found in~\cite{CCWF15}. In this section, we present a different derivation of that fact yielding in \eqref{eq:Ec-SDP} a representation dual to the one found in~\cite{CCWF15}. In \Cref{sec:concavity} we use this representation to prove that the guesswork $G(X|B)_\rho$ is a concave function of the c-q state $\rho_{XB}$. In \Cref{sec:Lipschitz} we likewise use this representation to obtain a Lipschitz continuity bound on the guesswork. Then in \Cref{sec:dual} we compute the dual SDP, recovering the one obtained in \cite{CCWF15}. In \Cref{sec:ub_algo} we use this dual representation to develop a simple algorithm to obtain upper bounds on the quantity. Lastly, in \Cref{sec:misdp}, we formulate a mixed-integer SDP representation of the problem, whose number of variables and constraints scales polynomially with all the relevant quantities (at the cost of adding binary constraints). We also provide implementations of these SDP representations \cite{guesswork_code}, using the Julia programming language \cite{Julia} and the optimization library Convex.jl \cite{Convex.jl-2014}.

Consider an ordered strategy $\vec G$ with a set of POVMs $\{ E_{\vec g} \}_{\vec g \in \cX^K}$. Then since $p_{\vec G, X}(\vec g, x) = p_X(x)\tr[ E_{\vec g} \rho_B^x]$, we have
\begin{equation}
    c_k p_{N(\vec G, X)}(k) = c_k \sum_{x \in \cX} p_X(x)\sum_{\substack{\vec g \in \cX^K\\ N(\vec g, x) = k}} \tr[E_{\vec g} \rho_B^x]
\end{equation}
and hence
\begin{align}
E_{\vec c}(N(\vec G, X)) &= \sum_{k=1}^K c_k \sum_{x \in \cX} p_X(x)\sum_{\substack{\vec g \in \cX^K\\ N(\vec g, x) = k}} \tr[E_{\vec g} \rho_B^x] \\ 
&\qquad + c_\infty  \sum_{x \in \cX} p_X(x)\sum_{\substack{\vec g \in \cX^K\\ N(\vec g, x) = \infty}} \tr[E_{\vec g} \rho_B^x]\\
&= \sum_{\vec g \in \cX^K}\sum_{x \in \cX}  c_{N(\vec g, x)}  p_X(x)\tr[E_{\vec g} \rho_B^x]\\
&= \sum_{\vec g \in \cX^K} \tr[R_{\vec g} E_{\vec g}]
\end{align}
where we define $R_{\vec g} := \sum_{x \in \cX}p_X(x) c_{N(\vec g, x)}   \rho_B^x$ for $\vec g\in \cX^K$.
Thus,
 \begin{equation}
 \label{eq:Ec-SDP-full}
 \begin{aligned}
 G_{\vec c}(X|B)_\rho\,\,	=\quad	& \text{minimize}	&	& \sum_{\vec g\in \cX^K}\tr[R_{\vec g} E_{\vec g}]\\
 								& \text{subject to}	&	& E_{\vec g} \geq 0 \qquad \forall \vec g \in \cX^K\\
 								&					&	& \sum_{\vec g \in \cX^K} E_{\vec g} = \one_B.
 \end{aligned}
 \end{equation}
The expression in \eqref{eq:Ec-SDP-full} clarifies that $R_{\vec g}$ has an interpretation as a cost operator corresponding to the guessing outcome $\vec g$.
Since $\sum_{\vec g \in \cX^K}\tr[R_{\vec g} E_{\vec g}]$ is linear in each positive semi-definite (matrix) variable $E_{\vec g}$, $G_{\vec c}(X|B)_\rho$ admits an SDP representation, given in \eqref{eq:Ec-SDP-full}. This program has $|\cX|^K$ variables (each $d_B\times d_B$ complex positive semi-definite matrices), subject to one constraint. Note, however, since the cost vector $\vec c$ is increasing, any guess $\vec h \in \cX^K$ with repeated elements is a suboptimal guessing order. That is, if $\{E_{\vec g}\}_{\vec g \in \cX^K}$ is a POVM with $E_{\vec h} \neq 0$, and
$\vec h' \in \cX^K$ only differs from $\vec h$ by replacing repeated elements such that $\vec h'$ has no repeated elements, then the POVM defined by
\begin{equation}
    \tilde E_{\vec g} := \begin{cases}
    E_{\vec g} & \vec g \neq \vec h \text{ and }\vec g \neq \vec h' \\
    0 & \vec g = \vec h\\
    E_{\vec h} + E_{\vec h'} & \vec g = \vec h'
    \end{cases}
\end{equation}
has $\sum_{\vec g \in \cX^K} \tr[ R_{\vec g} \tilde E_{\vec g}] \leq \sum_{\vec g \in \cX^K} \tr[ R_{\vec g}  E_{\vec g}]$. Hence, we may restrict to the outcome space
\begin{equation}\label{eq:def_X_K}
    \cX^K_{\neq} := \{ \vec g\in \cX^K : g_i \neq g_j, \forall i \neq j \} \subseteq \cX^K.
\end{equation}
Note $|\cX^K_{\neq}| = \frac{|\cX|!}{(|\cX| - K)!}$, and in the case in which $K = |\cX|$, the set $\cX_{K}$ is just the set of permutations of $\cX$. Hence, \eqref{eq:Ec-SDP-full} can be re-written as the following smaller problem (with $\frac{|\cX|!}{(|\cX| - K)!}$ instead of $|\cX|!$ constraints):
 \begin{equation}\label{eq:Ec-SDP}
 \begin{aligned}
 G_{\vec c}(X|B)_\rho\,\,	=\quad	& \text{minimize}	&	& \sum_{\vec g\in \cX^K_{\neq}}\tr[R_{\vec g} E_{\vec g}]\\
 								& \text{subject to}	&	& E_{\vec g} \geq 0 \qquad \forall \vec g \in \cX^K_{\neq}\\
 								&					&	& \sum_{\vec g \in \cX^K_{\neq}} E_{\vec g} = \one_B.
 \end{aligned}
 \end{equation}
Note that in the case $c_\infty = \infty$ and $K < |\cX|$, there exists a finite solution if and only if there exists a POVM $\{ E_{\vec g}\}_{\vec g \in \cX^K_{\neq}}$ such that for all $x\in \cX$ and $\vec g \in \cX^K_{\neq}$ with $x \not \in \vec g$, we have $\tr[ E_{\vec g} \rho_B^x] = 0$. Whether or not this holds depends on the particular state $\rho_{XB}$.
However, when $c_\infty < \infty$ or $K = |\cX|$, for any state $\rho_{XB}$, the problem \eqref{eq:Ec-SDP}  has a finite solution. Moreover, for any POVM $\{ E_{\vec g}\}_{\vec g \in \cX^K_{\neq}}$, the objective $\sum_{\vec g\in \cX^K_{\neq}}\tr[R_{\vec g} E_{\vec g}]$ is finite. In the following, we restrict to those two cases.

\begin{remark}
This optimization problem has the same form as that of discriminating quantum states in an ensemble, as described in, {{e.g.}}, \cite[Section 3.2.1]{Wat18}. 
Note, however, that (1) the $R_{\vec g}$ are positive semi-definite but not normalized, and (2) the case of having two copies of the unknown state in the guessing framework does not correspond to $R_{\vec g}^{\otimes 2}$.  Nevertheless,
slight modifications to~\cite[Theorem 3.9]{Wat18} show that a POVM $\{E_{\vec g}\}_{\vec g \in \cX^K_{\neq}}$ is optimal for \eqref{eq:Ec-SDP} if and only if
\begin{equation}
Y = \sum_{\vec g \in \cX^K_{\neq}} R_{\vec g} E_{\vec g} 
\end{equation}
satisfies $Y \leq R_{\vec g}$ for all $\vec g \in \cX^K_{\neq}$.
\end{remark}

\begin{remark}
The set of POVMs is convex and since the objective function is linear, any minimizer for \eqref{eq:Ec-SDP} may be decomposed into extremal POVMs which are also minimizers. By~\cite[Corollary 2.2]{Par99}, any extremal POVM on a Hilbert space of size $d_B$ has at most $d_B^2$ non-zero elements. Hence, there exist minimizers of \eqref{eq:Ec-SDP} with at most $d_B^2$ non-zero elements (even though $|\cX^K_{\neq}|$ could be far larger than $d_B^2$). Let $S \subseteq \cX^K_{\neq}$ be a set of $d_B^2$ points such that there exists $\{\tilde E_{\vec g}\}_{\vec g \in S}$ with $\tilde E_{\vec g}\geq 0$, $\sum_{\vec g \in S} \tilde E_{\vec g} = \one_B$, and
\begin{equation}
G_{\vec c}(X|B)_\rho = \sum_{\vec g \in S} \tr[\tilde E_{\vec g} R_{\vec g}].
\end{equation}
Then \eqref{eq:Ec-SDP} holds with $\cX^K_{\neq}$ replaced by $S$, namely
 \begin{equation}\label{eq:Ec-SDP-small}
 \begin{aligned}
 G_{\vec c}(X|B)_\rho\,\,	=\quad	& \text{minimize}	&	& \sum_{\vec g\in S}\tr[R_{\vec g} E_{\vec g}]\\
 								& \text{subject to}	&	& E_{\vec g} \geq 0 \qquad \forall \vec g \in S\\
 								&					&	& \sum_{\vec g \in S} E_{\vec g} = \one_B.
 \end{aligned}
 \end{equation}
 Note the ``$\leq$'' direction of the equality \eqref{eq:Ec-SDP-small} is trivial, since given a minimizer $\{E_{\vec g}\}_{\vec g \in S}$ for \eqref{eq:Ec-SDP-small}, simply extending it by choosing $E_{\vec g} = 0$ for $\vec g \not \in S$ gives a feasible point for the optimization problem on the right-hand side of \eqref{eq:Ec-SDP}. The ``$\geq$'' direction follows from the existence of the $\{\tilde E_{\vec g}\}_{\vec g \in S}$ described above. We note here that it is nontrivial to identify the set $S$. In general, identifying the set $S$ is as difficult as solving the original problem in \eqref{eq:Ec-SDP}. However, applying a heuristic method inspired by choosing a smaller set of constraints can lead to useful upper bounds on the guesswork, which we describe in Section~\ref{sec:ub_algo}.
\end{remark}

\subsection{The dual problem}\label{sec:dual}

Next, we compute the dual problem to \eqref{eq:Ec-SDP}, in the case $K= |\cX|$ or $c_\infty < \infty$. Consider the Lagrangian
\begin{align}	
 &\cL((E_{\vec g})_{\vec g \in \cX^K_{\neq}}, (\lambda_{\vec g})_{\vec g \in \cX^K_{\neq}}, \nu) = \sum_{\vec g\in \cX^K_{\neq}}\braket{ R_{\vec g}, E_{\vec g}} \\
 & \qquad- \sum_{\vec g\in \cX^K_{\neq}}\braket{\lambda_{\vec g}, E_{\vec g}} + \Braket{\nu, \sum_{\vec g \in \cX^K_{\neq}} E_{\vec g} - \one_B}\\
 &= \sum_{\vec g\in \cX^K_{\neq}}\braket{R_{\vec g}-\lambda_{\vec g} + \nu, E_{\vec g}}  - \tr[\nu]
\end{align}
where we have introduced the Hilbert--Schmidt product $\braket{A,B} = \tr[A^\dagger B]$, and where $\lambda_{\vec g}\geq 0$ is the dual variable to the inequality constraint $E_{\vec g}\geq 0$, and $\nu = \nu^\dagger$ is the dual variable to the equality constraint $\sum_{\vec g \in \cX^K_{\neq}} E_{\vec g} = \one_B$. As shown in, {{e.g.}}, \cite{BV04}, the primal problem \eqref{eq:Ec-SDP} can be expressed as
\begin{equation}\label{eq:primal-formula}
 \min_{(E_{\vec g})_{\vec g \in \cX^K_{\neq}}}\max_{\lambda_{\vec g} \geq 0, \nu}\cL((E_{\vec g})_{\vec g \in \cX^K_{\neq}}, (\lambda_{\vec g})_{\vec g \in \cX^K_{\neq}}, \nu)
\end{equation}
while the dual problem is given by
\begin{equation}\label{eq:dual-formula}
\max_{\lambda_{\vec g} \geq 0, \nu}\min_{(E_{\vec g})_{\vec g \in \cX^K_{\neq}}}\cL((E_{\vec g})_{\vec g \in \cX^K_{\neq}}, (\lambda_{\vec g})_{\vec g \in \cX^K_{\neq}}, \nu).
\end{equation}
If $R_{\vec g} - \lambda_{\vec g} + \nu \neq 0$ for any $\vec g\in \cX^K_{\neq}$, then the inner minimization in \eqref{eq:dual-formula} yields $-\infty$. Hence,
\begin{multline}
\min_{(E_{\vec g})_{\vec g \in \cX^K_{\neq}}}\cL((E_{\vec g})_{\vec g \in \cX^K_{\neq}}, (\lambda_{\vec g})_{\vec g \in \cX^K_{\neq}}, \nu) \\
= \begin{cases}
- \infty & R_{\vec g} - \lambda_{\vec g} + \nu \neq 0 \quad \exists \vec g \in \cX^K_{\neq}\\
- \tr[\nu] & \text{else.}
\end{cases}
\end{multline}
The constraint $\lambda_{\vec g} \geq 0$ and $R_{\vec g} - \lambda_{\vec g} + \nu = 0$ imply the semi-definite inequality $-\nu \leq R_{\vec g}$. Writing $Y = - \nu$ and maximizing over $\lambda_{\vec g} \geq 0$, \eqref{eq:dual-formula} becomes
\begin{equation} \label{eq:dual_SDP}
\begin{aligned}
\text{maximize} \quad & \tr[Y]\\
\text{subject to} \quad & Y = Y^\dagger \\
& Y \leq R_{\vec g} \qquad \forall \vec g \in \cX^K_{\neq}
\end{aligned}
\end{equation}
Since \eqref{eq:Ec-SDP} is strictly feasible ({{e.g.}}, $E_{\vec g} = \one_B \frac{1}{|\cX^K_{\neq}|}$ is a strictly feasible point) by Slater's condition, strong duality holds. Hence, \eqref{eq:dual_SDP} obtains the same optimal value as \eqref{eq:Ec-SDP}. The formulation of the problem as given in  \eqref{eq:dual_SDP} was previously found in the work~\cite[Proposition 3]{CCWF15}. In contrast to the primal SDP \eqref{eq:Ec-SDP-small}, the dual problem has a single variable $Y$ subject to $| \cX^K_{\neq} | $ constraints.

\subsection{A simple algorithm to compute upper bounds}\label{sec:ub_algo}

The dual form of the SDP can be used to generate upper bounds on $G_{\vec c}(X|B)_\rho$ simply by removing constraints. This provides an algorithm to find an upper bound on the objective function: Decide on some number of constraints $\kappa$ to impose in total. Then,
\begin{enumerate}
	\item  Initialize an empty list $L=\{\}$ corresponding to constraints to impose.
	\item Set $Y$ to be the identity matrix, as a first guess at the optimal dual variable.
	\item If $Y$ satisfies $Y \leq R_{\vec g}$ for all $\vec g \in \cX^K_{\neq}$, then $Y$ is the maximizer of the dual problem \eqref{eq:dual_SDP}, and the optimization is solved. Otherwise, find $\vec g\in \cX^K_{\neq}$ such that $Y \not \leq R_{\vec g}$, and add $\vec g$ to the list $L$.
	\item  Solve the problem
	\begin{equation} \label{eq:dual_SDP-L}
	\begin{aligned}
	\text{maximize} \quad & \tr[Y]\\
	\text{subject to} \quad & Y = Y^\dagger \\
	& Y \leq R_{\vec g} \qquad \forall \vec g \in L
	\end{aligned}
	\end{equation}
	and set $Y$ to be its maximizer.
	\item Repeat steps 2 and 3 until the list $L$ has length $\kappa$.
	\item Solve the problem one last time, and return the output.
\end{enumerate}
In order to find a constraint that $Y$ violates, a heuristic technique such as simulated annealing can be used. Moreover, in the case that there are too many constraints to fit into memory or check exhaustively, using an iterative technique (such as simulated annealing) is essential. If this algorithm was continued (without imposing a limit on the total number of constraints $\kappa$ to impose), it would eventually yield the true value $G_{\vec c}(\rho_{XB},K)$. When a total number of constraints is limited, it provides an upper bound (since it is a relaxation of \eqref{eq:dual_SDP}).

However, even with a limit $\kappa$ on the total number of constraints, this algorithm can in theory yield the true value $G_{\vec c}(X|B)_\rho$. Note that the dual problem to \eqref{eq:Ec-SDP-small} is 
\begin{equation} \label{eq:dual_SDP-S}
\begin{aligned}
\text{maximize} \quad & \tr[Y]\\
\text{subject to} \quad & Y = Y^\dagger \\
& Y \leq R_{\vec g} \qquad \forall \vec g \in S
\end{aligned}
\end{equation}
where $S \subseteq \cX^K_{\neq}$ has $|S| = d_B^2$ and is described in the remark above. Hence, if $L$ in \eqref{eq:dual_SDP-L} equals $S$, then the algorithm finds the true value $G_{\vec c}(X|B)_\rho$, not just an upper bound. Thus, $\kappa = d_B^2$ suffices if the constraints $\vec g$ can be chosen precisely to obtain $L=S$. In general, finding $S$ is as difficult as solving the original problem. Nonetheless, this motivates why choosing a relatively small value of $\kappa$ (such as $d_B^2$) can still yield good upper bounds.

\subsection{A mixed-integer reformulation}\label{sec:misdp}

The problem of computing $G_{\vec c} (X|B)_{\rho}$ can be formulated another way as a mixed-integer SDP, \textit{i.e.}, an SDP that has additional integer or binary constraints. Consider a POVM $\{F_j\}_{j=1}^M$ with $M$ outcomes. When outcome $j$ is obtained, Bob guesses in some order $\vec g^{(j)} \in \cX^K_{\neq}$. Then consider the problem
\begin{equation} \label{eq:misdp-1}
\begin{aligned}
	& \text{minimize}	&	& \sum_{x\in \cX, j = 1,\dotsc, M}p_X(x)  c_{N(\vec g^{(j)}, x)}   \tr[ F_j \rho_B^x]\\
 								& \text{subject to}	&	& F_j \geq 0 \qquad j = 1,\dotsc, M,\\
 								&					&	& \vec g^{(j)} \in \cX^K_{\neq}, j = 1,\dotsc, M,\\
 								&					&	& \sum_{j=1}^M F_j = \one_B.
\end{aligned}
\end{equation}
We note that in the above, the variables to be optimized over are both the POVM $\{F_j\}_{j=1}^M$ and the guessing orders $\{\vec g^{(j)} \}_j$ corresponding to each POVM outcome.
This optimization is not an SDP, since the dependence on the optimization variables $\{\vec g^{(j)}\}_{j=1}^M$ and $\{F_j\}$ is not linear, and $\vec g^{(j)} \in \cX^K_{\neq}$ is a discrete constraint. Consider, however, the case that $K = |\cX|$. With this assumption, we will be able to remove the nonlinearity although not the discrete variables. This yields a \emph{mixed-integer} SDP: an optimization problem such that if all integer constraints were removed, the result would be an SDP. We proceed as follows.

Under the condition $K = |\cX|$, we may restrict to considering guessing orders that are permutations without loss of generality;  other guessing orders have repeated guesses, which can only increase the value of the objective function. In this case, the outcome $\infty$ never occurs, and for each $\vec g \in S_{|\cX|}$, the quantity $( c_{N(\vec g, x)} )_{x \in \cX}$ satisfies
\begin{equation}
( c_{N(\vec g, x)} )_{x \in \cX} = \vec g \, \inv(c),
\end{equation}
where $\vec g \inv$ is the inverse permutation to $\vec g$, and $c = (c_k)_{k=1}^{K}$ is the cost vector (without $\infty$). Here, $S_n$ is the set of permutations on $\{1,\dotsc,n\}$. Let $P^{(j)}$ be an $|\cX|\times |\cX|$ matrix representation of the permutation $\vec g^{(j)}{}\inv$. Then $(P^{(j)} c)_x = \sum_{y\in \cX} P^{(j)}_{xy} c_y = c_{N(\vec g, x)}$. Hence, the optimization \eqref{eq:misdp-1} can be reformulated as
\begin{equation}\label{eq:misdp-2}
\begin{aligned}
	& \text{minimize}	&	&  \sum_{x,y\in \cX, j = 1,\dotsc, M}p_X(x)  P^{(j)}_{xy} c_y   \tr[ F_j \rho_B^x]\\
 								& \text{subject to}	&	& F_j \in \cM_{d_B} \qquad \forall\,j \in [M],\\
 								&					&	& P^{(j)}_{xy} \in \{0,1\}, \quad \forall\,j \in [M], x,y \in \cX\\
 								& 	&	& F_j \geq 0 \qquad \forall\,j \in [M],\\
 								&					&	& \sum_{j=1}^M F_j = \one_B,\\
 								&					&	& \sum_{x\in \cX}P^{(j)}_{xy} = 1, \quad \forall\,j \in [M], y \in \cX\\
 								&					&	& \sum_{y\in \cX}P^{(j)}_{xy} = 1, \quad \forall\,j \in [M], x \in \cX\\
 					\end{aligned}
\end{equation}
Note that all the constraints are semi-definite or linear, except that each element $P^{(j)}_{xy}$ is a binary variable: $P^{(j)}_{xy} \in \{0,1\}$, which is a particularly simple type of discrete constraint. The non-linearity in the objective function, however, persists. To remove this, we take advantage of the  fact that the  $P^{(j)}_{xy}$ are binary. In particular,~\cite[Equations (22)--(24)]{BDNK19} provide  a clever trick to turn objective functions with terms of the form $z x$ where $z$ is a binary variable and $x$ a continuous variable into objective functions of a continuous variable $y$ subject to four affine constraints (in terms of $x$ and $z$), as long as $x$ is bounded by known constants. We reproduce this argument in the following.

 We first write the objective function entirely in terms of scalar quantities:
\begin{multline}
\sum_{x,y\in \cX, j \in [M]}p_X(x)  P^{(j)}_{xy} c_y   \tr[ F_j \rho_B^x] \\
= \sum_{k,\ell \in [d_B]}\sum_{x,y\in \cX, j \in [M]}p_X(x)(\rho_B^x)_{k\ell}  c_y\,  P^{(j)}_{xy}  (F_j)_{\ell k} 
\end{multline}
Let $x = (F_j)_{\ell k} $ and $z = P^{(j)}_{xy} \in \{0,1\}$. Then $|x| \leq \tr[F_j]/2 \leq d_B/2$. Then $x_L := - d_B/2$ and $x_U := d_B/2$ constitute lower and upper bounds to $x$, respectively. Hence, the following four inequalities hold trivially:
\begin{equation}
\begin{aligned}\label{eq:ref1}
z (x - x_L) \geq 0,\\
(z-1)(x - x_U) \geq 0,\\
z(x - x_U) \leq 0,\\
(z-1)(x - x_L) \leq 0.
\end{aligned}
\end{equation}
Now, let $y = xz$. Then we have
\begin{equation}\label{eq:ref2}
\begin{aligned}
y - zx_L \geq 0,\\
y - z x_U \geq x - x_U,\\
y - z x_U \leq 0,\\
y - zx_L \leq x - x_L.
\end{aligned}
\end{equation}
On the other hand, let us remove the constraint $y= xz$, and consider $y$ as another variable. Then if $z = 0$, the first equation of \eqref{eq:ref2} implies that $y \geq 0$, while the third implies $y\leq 0$, so $y= 0$. On the other hand, if $z=1$, then the second equation of \eqref{eq:ref2} implies that $y \geq x$ while the fourth implies that $y \leq x$. Hence, either way, $y = xz$. Thus, \eqref{eq:ref2} is equivalent to $y = xz$.

With this transformation, \eqref{eq:misdp-2} can be reformulated as the following. 
\begin{equation} \label{eq:misdp-final}
\begin{aligned}
	& \text{minimize}	&	&  \sum_{k,\ell \in [d_B]}\sum_{x,y\in \cX, j \in [M]}p_X(x)(\rho_B^x)_{k\ell}  c_y\, y_{xy \ell k j}\\
 								& \text{subject to}	&	& F_j \in \cM_{d_B} \qquad \forall\,j \in [M],\\
 								&					&	& y_{xy \ell k j} \in \mathbb{R}, \quad  \forall \,x,y \in \cX, \ell,k \in [d_B], j \in [M],\\
 								&					&	& P^{(j)}_{xy} \in \{0,1\}, \quad \forall\,j \in [M], x,y \in \cX\\
 								& 	&	& F_j \geq 0 \qquad \forall\,j \in [M],\\
 								&					&	& \sum_{j=1}^M F_j = \one_B,\\
 								&					&	& \sum_{x\in \cX}P^{(j)}_{xy} = 1 \quad \forall\,j \in [M], y \in \cX, \\
 								&					&	& \sum_{y\in \cX}P^{(j)}_{xy} = 1 \quad \forall\,j \in [M], x \in \cX, \\
&					&	&  y_{xy \ell k j} + P^{(j)}_{xy} \frac{d_B}{2} \geq 0,\\
&					&	&  y_{xy \ell k j} - P^{(j)}_{xy}  \frac{d_B}{2} \geq (F_j)_{\ell k}  - \frac{d_B}{2},\\
&					&	&  y_{xy \ell k j} - P^{(j)}_{xy}  \frac{d_B}{2} \leq 0,\\
&					&	&  y_{xy \ell k j} + P^{(j)}_{xy} \frac{d_B}{2} \leq (F_j)_{\ell k}  + \frac{d_B}{2}.
\end{aligned}
\end{equation}
where the last four constraints hold for $\forall \,x,y \in \cX, \ell,k \in [d_B], \text{ and } j \in [M]$. This is a mixed-integer SDP, with a number of constraints and variables that is polynomial in $M, d_B, |\cX|$. Moreover, if $M \geq d_B^2$, then as follows from the remark below \eqref{eq:Ec-SDP}, the mixed-integer SDP \eqref{eq:misdp-final} obtains the same optimal value as \eqref{eq:Ec-SDP}, namely $G_{\vec c}(\rho_{XB}, |\cX|)$, using that $K = |\cX|$. Note, however, that mixed-integer SDPs are not in general efficiently solvable; they encompass mixed integer linear programs, which are NP-hard. However, in practice they can sometimes be quickly solved. Since the original SDP formulation \eqref{eq:Ec-SDP} involves an exponential (in $|\cX|$) number of variables (or an exponential number of constraints in its dual formulation \eqref{eq:dual_SDP}), Eq.~\eqref{eq:misdp-final}  may provide a more practical approach in some cases because it instead has a polynomial (in $|\cX|$) number of variables. Mixed-integer SDPs can be solved in various ways; in the code accompanying this paper \cite{guesswork_code}, the problem \eqref{eq:misdp-final} is solved using the library Pajarito.jl \cite{CLV18}, which proceeds by solving an alternating sequence of mixed-integer linear problems and SDPs.

\section{The ellipsoid algorithm}

The ellipsoid algorithm (see, e.g., \cite{GLS93}) provides a theoretical proof that under a strict feasibility assumption, semi-definite programs can be solved in time that scales as a polynomial in: the number of scalar variables and constraints, the logarithm of a 2-norm bound on the feasible points, $\ln(1/\eps)$ where $\eps$ is the solution tolerance, and the maximum bit length of the scalar entries of the objective and constraints (see e.g. \cite[Theorem 4]{Wat09a}). 

In fact, the ellipsoid algorithm applies quite generally to the optimization of a linear objective function over a convex \emph{feasible region} (which could be described by a domain and constraint functions). The ellipsoid algorithm only requires a \emph{separation oracle} for the feasible region, a subroutine which either asserts that a given point lies within the feasible region, or provides a separating hyperplane between the given point and the feasible region. When the separation oracle can be evaluated in polynomial time, the overall ellipsoid algorithm runs in polynomial time as well (see \cite[Corollary 4.2.7]{GLS93}).

In the case of a single positive semi-definite constraint, e.g. $Y \geq 0$, a simple separation oracle is given by computing the eigendecomposition of $Y$ and checking if all of its eigenvalues are non-negative. If so, it returns that $Y$ is indeed feasible, and otherwise returns the matrix $C := U \text{diag}(f(\lambda_1),\dotsc,f(\lambda_d)) U^\dagger$ where $Y = U\text{diag}(\lambda_1,\dotsc,\lambda_d)U^\dagger$ is the eigendecomposition of $Y$, $U$ is unitary, $\lambda_1,\dotsc,\lambda_d$ are the eigenvalues, and $f(x) = 1$ if $x < 0$ and $f(x) = 0$ otherwise. This matrix has the properties that $C\geq 0$, $\|C\|_\infty = 1$, and $\tr[C^\dagger Y] = \sum_{i=1}^d f(\lambda_i)\lambda_i = \sum_{i: \lambda_i < 0}\lambda_i <0$.

In the case of the dual problem \eqref{eq:dual_SDP} with $K=|\cX|$, we have $|\cX|!$ positive semi-definite constraints. Thus, we cannot check all of them together in polynomial time.

The feasibility problem $Y\leq R_{\pi}$ for each $\pi \in S_{|\cX|}$ can be written as the following mixed-integer non-linear problem,
\begin{equation*}	
\begin{aligned}
\eta := \text{minimize} &\quad \braket{\psi, \left(\sum_{x\in \cX}(P c)_{x} p_X(x) \rho_B^x - Y\right) \psi}\\
\text{subject to} &\quad \sum_{i\in \cX} P_{ij} = \sum_{j\in \cX} P_{ij} = 1\\
&\quad P_{ij}\in \{0\}, \,\,i,j \in \cX \\
&\quad \psi \in \bC^{d_B}
\end{aligned}
\end{equation*}
where $\eta \geq 0$ if and only if $Y \leq R_{\pi}$ for all $\pi \in S_{|\cX|}$, using that a matrix $M$ satisfies $M\geq 0$ if and only if $\braket{\psi, M \psi}\geq 0$ for all $\psi \in \bC^{d_B}$. Since the convex hull of the permutation matrices is given by the doubly stochastic matrices, the discrete constraints can be relaxed, yielding the following reformulation
\begin{equation}\label{eq:nl_feasibility}
\begin{aligned}
\eta = \text{minimize} &\quad \braket{\psi, \left(\sum_{x\in \cX}(Dc)_{x} p_X(x) \rho_B^x - Y\right) \psi}\\
\text{subject to} &\quad \sum_{i\in \cX} D_{ij} = \sum_{j\in \cX} D_{ij} = 1\\
&\quad D_{ij} \geq 0, \,\,i,j \in \cX \\
&\quad \psi \in \bC^{d_B}.
\end{aligned}
\end{equation}
If $\eta\geq 0$, then $Y$ is feasible. Otherwise, the optimal value $D^*$ can be decomposed as a convex combination of permutations, $D^* = \sum_i \alpha_i P_i$, and we must have $\sum_{x} (P_i c)_x p_X(x) \rho_B^x \not \geq Y$ for some $P_i$, using that the objective is an affine function of $D$. The problem \eqref{eq:nl_feasibility} can be solved by global non-linear optimization solvers such as EAGO.jl \cite{WS17} or SCIP \cite{GamrathEtal2020OO}, but not in general in polynomial time.

At each iteration of the ellipsoid algorithm, one must evaluate the separation oracle for some Hermitian matrix $Y$. In order to avoid solving \eqref{eq:nl_feasibility}, one may attempt to prove the feasibility or infeasibility of a point $Y$ by other means. For example, one may search over permutations $\pi$ heuristically, in order to find $R_\pi$ such that $Y\not\leq R_\pi$. If such a permutation can be identified, then $Y$ is not feasible, and the problem \eqref{eq:nl_feasibility} does not need to be solved. Likewise, if one can show that for some $k \in (1,\dotsc,|\cX|)$,
\begin{multline}
Y \leq \sum_{i=1}^k c_{|\cX|-i} p_{x_{i}} \rho_B^{x_{i}} + c_1 \sum_{x \in \cX \setminus \{x_{i_1},\dotsc,x_{i_k}\}}p_x \rho_B^x \\
\forall (x_1,\dotsc,x_k)\in \cX^k_{\neq}
\end{multline}
then $Y$ must be feasible, and again \eqref{eq:nl_feasibility} does not need to be solved. The number of comparisons required scales as $|\cX|^k$; for small choices of $k$, this provides an efficient check for feasibility (which may, however, be inconclusive).

\subsection{Numerical comparisons}

We compare numerical implementations of several of the above algorithms on a set of 12 test problems. The code for these experiments can be found at \cite{guesswork_code}. Each problem has $p \equiv u$, the uniform distribution $u := (1/|\cX|, \dotsc, 1/|\cX|)$, for simplicity. The states are chosen as
\begin{enumerate}
	\item Two random qubit density matrices
	\item Two random qutrit density matrices
	\item Three pure qubits chosen equidistant within one  plane of the Bloch sphere (the qubit trine states), i.e.\@
	\[
	\cos\left(j \tfrac{2\pi}{3}\right) \ket{0} + \sin\left(j \tfrac{2\pi}{3}\right) \ket{1}, \qquad j = 1,2,3
	\]
	\item Three random qubit density matrices
	\item Three random qutrit density matrices
	\item The four BB84 states, $\ket{0}, \ket{1}$, and $\ket{\pm} = \frac{1}{\sqrt{2}}(\ket{0} \pm \ket{1})$
\end{enumerate}
as well as the ``tensor-2'' case of
\begin{equation}\label{eq:tensor-2}
\{ \rho \otimes \sigma : \rho, \sigma \in S\}
\end{equation}for each of the six sets $S$ listed above, corresponding to the guesswork problem with quantum side information associated to $\rho_{XB}^{\otimes 2}$, where $\rho_{XB}$ is the state associated to the original guesswork problem with quantum side information. The random states were chosen uniformly at random (i.e.\@~according to the Haar measure).

The exponentially-large SDP formulation (and its dual), the mixed-integer SDP algorithm, and the active set method were compared, with several choices of parameters and underlying solvers. The mixed-integer SDP formulation was evaluated with $M=d_B$ (yielding an upper bound), $M=d_B^2$ (yielding the optimal value), with the Pajarito mixed-integer SDP solver \cite{CLV18}, using Convex.jl (version 0.12.7) \cite{Convex.jl-2014} to formulate the problem. Pajarito proceeds by solving mixed-integer linear problems (MILP) and SDPs as subproblems, and thus uses both a MILP solver and an SDP solver as subcomponents. Pajarito provides two algorithms: an iterative algorithm, which alternates between solving MILP and SDP subproblems, and solving a single branch-and-cut problem in which SDP subproblems are solved via so-called lazy callbacks to add cuts to the mixed-integer problem. The latter is called ``mixed-solver drives'' (MSD) in the Pajarito documentation. We tested three configurations of Pajarito (version 0.7.0):

\begin{description}
\item[(c1)] Gurobi (version 9.0.3) as the MILP solver and MOSEK
(version 8.1.0.82) as the SDP solver, with Pajarito's MSD algorithm,
\item[(c2)] Gurobi as the MILP solver and MOSEK as the SDP solver, with Pajarito's iterative algorithm, with a relative optimality gap tolerance of $0$,
\item[(o)] Cbc \cite{Cbc} (version 2.10.3) as the MILP solver, and SCS \cite{SCS} (version 2.1.1) as the SDP solver, with Pajarito's iterative algorithm.
\end{description}

Here, `c' stands for commercial, and `o' for open-source. In the configuration (c1), Gurobi was set to have  a relative optimality gap tolerance of $10^{-5}$ and in (c2), a relative optimality gap tolerance of $0$. In both configurations, Gurobi was given an absolute linear-constraint-wise feasibility tolerance of $10^{-8}$, and an integrality tolerance of $10^{-9}$. These choices of parameters match those made in \cite{CLV18}. Cbc was given an integrality tolerance of $10^{-8}$, and SCS's (normalized) primal, dual residual and relative gap were set to $10^{-6}$ for each problem. The default parameters were used otherwise. Note the MSD option was not used with Cbc, since the solver does not support lazy callbacks.

For the (exponentially large) SDP primal and dual formulations, the problems were solved with both MOSEK and SCS, and likewise with the active-set upper bound.

The active set method uses simulated annealing to iteratively add violated constraints to the problem to find an upper bound, as described in \Cref{sec:ub_algo}, and uses a maximum-time parameter $t_\text{max}$ to stop iterating when the estimated time of finding another constraint to add would cause the running time to exceed the maximum-time\footnote{The maximum time can still be exceeded, since at least one iteration must be performed and the estimate can be wrong.}. This provides a way to compare the improvement (or lack thereof) of running the algorithm for more iterations. The algorithm also terminates when a violated constraint cannot be found after 50 runs of simulated annealing (started each time with different random initial conditions). Here, the problems were solved with three choices of $t_\text{max}$, $20$\,s, $60$\,s, and $240$\,s.

The exact answer was not known analytically for most of these problems, and so the average relative error was calculated by comparing to the mean of the solutions (excluding the active-set method and the MISDP with $M=d_B$, which only give an upper bound in general). For the cases of the BB84 states and the Y-states, where the solution is known exactly (see \Cref{sec:example}), the solutions obtained here match the analytic value to a relative tolerance of at least $10^{-7}$.

The problems were run sequentially on a four-core desktop computer (Intel i7-6700K 4.00GHz CPU, with 16 GB of RAM, on Ubuntu-20.04 via Windows Subsystem for Linux, version 2), via the programming language Julia \cite{Julia} (version 1.5.1), with a 5 minute time limit. The results are summarized in \Cref{tab:summary}, and presented in more detail in \Cref{tab:first_problems} and \Cref{tab:second_problems}.

One can see that the MISDP problems were harder to solve than the corresponding SDPs for these relatively small problem instances. The MISDPs have the advantage of finding extremal solutions, however, in the case $M=d_B$, and may scale better to large instances. Additionally, the active-set upper bound performed fairly well, finding feasible points within 20\,\% of the optimum in all cases, with only $t_\text{max}=20$\,s, and often finding near-optimal solutions. It was also the only method able to scale to the largest instances tested, such as two copies of the BB84 states (which involves 16 quantum states in dimension four, and for which the SDP formulation has 16!\@ variables.). In general, the commercial solvers performed better than the open source solvers, with the notable exception of the active-set upper bound with MOSEK, in which two more problems timed out than with SCS. This could be due to SCS being a first-order solver which can therefore possibly scale to larger problem instances than MOSEK, which is a second-order solver.

\begin{table*}[t!]
\begin{adjustbox}{center}
\begin{tabular}{p{2.75cm}p{2.25cm}p{1.75cm}p{1.75cm}p{1.75cm}p{1.75cm}p{1.75cm}}
Algorithm & Parameters & average rel. error & average time & number solved & number timed out & number errored out\\ \toprule
MISDP ($d_B$) & Pajarito (c1) & 0\,\% & 23.74\,s & 6 & 6 & 0\\
 & Pajarito (c2) & 0\,\% & 24.03\,s & 6 & 6 & 0\\
 & Pajarito (o) & 0\,\% & 45.05\,s & 6 & 6 & 0\\\addlinespace[0.75em]
MISDP ($d_B^2$) & Pajarito (c1) & 0\,\% & 35.27\,s & 5 & 7 & 0\\
 & Pajarito (c2) & 0\,\% & 27.97\,s & 4 & 8 & 0\\
 & Pajarito (o) & 0\,\% & 131.35\,s & 4 & 8 & 0\\\addlinespace[0.75em]
SDP & MOSEK & 0\,\% & 8.99\,s & 8 & 3 & 1\\
 & SCS & 0\,\% & 9.08\,s & 8 & 3 & 1\\\addlinespace[0.75em]
SDP (dual) & MOSEK & 0\,\% & 8.74\,s & 8 & 3 & 1\\
 & SCS & 0\,\% & 8.59\,s & 8 & 3 & 1\\\addlinespace[0.75em]
\multirow[t]{3}{2.75cm}[0pt]{Active set upper bound (MOSEK)} & $t_\text{max}=20$\,s & 6.80\,\% & 16.08\,s & 10 & 2 & 0\\
 & $t_\text{max}=60$\,s & 6.79\,\% & 19.03\,s & 10 & 2 & 0\\
 & $t_\text{max}=240$\,s & 6.80\,\% & 26.17\,s & 10 & 2 & 0\\\addlinespace[0.75em]
\multirow[t]{3}{2.75cm}[0pt]{Active set upper bound (SCS)} & $t_\text{max}=20$\,s & 6.09\,\% & 33.24\,s & 12 & 0 & 0\\
 & $t_\text{max}=60$\,s & 6.09\,\% & 35.78\,s & 12 & 0 & 0\\
 & $t_\text{max}=240$\,s & 6.09\,\% & 34.30\,s & 11 & 1 & 0
\end{tabular}
\end{adjustbox}
\vspace{1em}
\caption{\label{tab:summary} Comparison of average relative error and average solve time for the 12 problems discussed above. A problem is considered ``timed out'' if an answer is not obtained in 5 minutes, and ``errored out'' if the solution was not obtained due to errors (such as running out of RAM). The average relative error, which was rounded to two decimal digits, and the time taken are calculated only over the problems which were solved by the given algorithm and choice of parameters. ``MISDP ($d_B$)'' refers to the choice $M=d_B$, and likewise ``MISDP ($d_B^2$)'' refers to the choice $M=d_B^2$.}
\end{table*}

\bigskip

\begin{table*}[t!]
\begin{adjustbox}{center}
\begin{tabular}{p{2.75cm}p{2.25cm}p{3cm}@{\hspace{1cm}}p{3cm}@{\hspace{1cm}}p{3cm}}
Algorithm & Parameters & Two\mbox{ }random qubits & Two\mbox{ }random qutrits & Y-states\\ \toprule
MISDP ($d_B$) & Pajarito (c1) & 23.63\,s, timeout & 23.60\,s, timeout & 23.56\,s, timeout\\
& Pajarito (c2) & 22.99\,s, timeout & 23.31\,s, timeout & 23.21\,s, timeout\\
& Pajarito (o) & 23.47\,s, timeout & 24.77\,s, timeout & 26.15\,s, timeout\\\addlinespace[0.75em]
MISDP ($d_B^2$) & Pajarito (c1) & 24.49\,s (0.00\,\%), timeout & 31.40\,s (0.00\,\%), timeout & 26.27\,s (0.00\,\%), timeout\\
 & Pajarito (c2) & 25.02\,s (0.00\,\%), timeout & 31.39\,s (0.00\,\%), timeout & 29.08\,s (0.00\,\%), timeout\\
 & Pajarito (o) & 26.54\,s (0.00\,\%), timeout & 212.79\,s (0.00\,\%), timeout & 141.14\,s (0.00\,\%), timeout\\\addlinespace[0.75em]
SDP & MOSEK & 8.69\,s, 8.84\,s & 8.78\,s, 9.23\,s & 9.33\,s, timeout\\
& SCS & 9.00\,s, 8.90\,s & 8.44\,s, 11.22\,s & 8.98\,s, timeout\\\addlinespace[0.75em]
SDP (dual) & MOSEK & 8.46\,s, 8.63\,s & 8.54\,s, 8.83\,s & 9.11\,s, timeout\\
 & SCS & 8.76\,s, 8.32\,s & 8.33\,s, 9.20\,s & 8.74\,s, timeout\\\addlinespace[0.75em]
\multirow[t]{3}{2.75cm}[0pt]{Active set upper bound (MOSEK)} & $t_\text{max}=20$\,s & 8.76\,s\mbox{ }(19.5\,\%), 10.41\,s\mbox{ }(1.5\,\%) & 8.89\,s\mbox{ }(19.5\,\%), timeout & 9.72\,s\mbox{ }(0\,\%), 34.25\,s\mbox{ }(?\@\,\%)\\
 & $t_\text{max}=60$\,s & 10.91\,s\mbox{ }(19.5\,\%), 10.41\,s\mbox{ }(1.9\,\%) & 8.87\,s\mbox{ }(19.5\,\%), timeout & 9.66\,s\mbox{ }(0\,\%), 31.00\,s\mbox{ }(?\@\,\%)\\
 & $t_\text{max}=240$\,s & 9.47\,s\mbox{ }(19.5\,\%), 10.40\,s\mbox{ }(1.5\,\%) & 8.90\,s\mbox{ }(19.5\,\%), timeout & 9.81\,s\mbox{ }(0\,\%), 30.26\,s\mbox{ }(?\@\,\%)\\\addlinespace[0.75em]
\multirow[t]{3}{2.75cm}[0pt]{Active set upper bound (SCS)} & $t_\text{max}=20$\,s & 9.04\,s\mbox{ }(19.5\,\%), 10.92\,s\mbox{ }(1.5\,\%) & 8.70\,s\mbox{ }(19.5\,\%), 101.06\,s\mbox{ }(1.1\,\%) & 9.23\,s\mbox{ }(0\,\%), 82.84\,s\mbox{ }(?\@\,\%)\\
 & $t_\text{max}=60$\,s & 9.07\,s\mbox{ }(19.5\,\%), 10.22\,s\mbox{ }(1.9\,\%) & 8.66\,s\mbox{ }(19.5\,\%), 32.94\,s\mbox{ }(1.2\,\%) & 9.18\,s\mbox{ }(0\,\%), 50.79\,s\mbox{ }(?\@\,\%) \\
 & $t_\text{max}=240$\,s & 9.04\,s\mbox{ }(19.5\,\%), 10.02\,s\mbox{ }(1.9\,\%) & 8.79\,s\mbox{ }(19.5\,\%), 22.69\,s\mbox{ }(1.2\,\%) & 9.31\,s\mbox{ }(0\,\%), 37.36\,s\mbox{ }(?\@\,\%)
\end{tabular}
\end{adjustbox}

\bigskip

\caption{\label{tab:first_problems} The individual timings for each algorithm and choice of settings on problems (1)--(3), and the corresponding ``tensor-2'' problems discussed at \eqref{eq:tensor-2}. For each algorithm, the running time of the original problem is given followed by the running time on the ``tensor-2'' problem, e.g.\@ the SDP formulation with MOSEK on the two random qubits problem was solved in 8.69 seconds, and in 8.84 seconds for the corresponding tensor-2 problem. ``timeout'' is written whenever the problem was not solved within 5 minutes. For the active set algorithms, the relative error is also given for each problem in parenthesis. Note that the MISDP formulation with $M=d_B$ is also only known to be an upper bound, but a relative error of less than $10^{-5}$ in each instance, so the relative errors are omitted. Lastly, the relative error is written as {?\@\,\%} in the case that only an upper bound was obtained.}
\end{table*}

\begin{table*}[t!]
\begin{adjustbox}{center}
\begin{tabular}{p{2.75cm}p{2.25cm}p{3cm}@{\hspace{1cm}}p{3cm}@{\hspace{1cm}}p{3cm}}
Algorithm & Parameters & Three\mbox{ }random qubits & Three\mbox{ }random qutrits & BB84\mbox{ }states\\ \toprule
MISDP ($d_B$) & Pajarito (c1) & 23.63\,s, timeout & 24.49\,s, timeout & 23.51\,s, timeout\\
 & Pajarito (c2) & 23.17\,s, timeout & 26.50\,s, timeout & 25.01\,s, timeout\\
 & Pajarito (o) & 27.11\,s, timeout & 95.01\,s, timeout & 73.80\,s, timeout\\\addlinespace[0.75em]
MISDP ($d_B^2$) & Pajarito (c1) & 25.82\,s (0.00\,\%), timeout & timeout, timeout & 68.35\,s (0.00\,\%), timeout\\
& Pajarito (c2) & 26.38\,s (0.00\,\%), timeout & timeout, timeout & timeout, timeout\\
& Pajarito (o) & 144.93\,s (0.00\,\%), timeout & timeout, timeout & timeout, timeout\\\addlinespace[0.75em]
SDP & MOSEK & 9.35\,s, timeout & 8.82\,s, timeout & 8.87\,s, error\\
 & SCS & 9.09\,s, timeout & 8.46\,s, timeout & 8.54\,s, error\\ \addlinespace[0.75em]
SDP (dual) & MOSEK & 9.14\,s, timeout & 8.55\,s, timeout & 8.62\,s, error\\
& SCS & 8.86\,s, timeout & 8.25\,s, timeout & 8.23\,s, error\\\addlinespace[0.75em]
\multirow[t]{3}{2.75cm}[0pt]{Active set upper bound (MOSEK)}  & $t_\text{max}=20$\,s & 9.73\,s\mbox{ }(1.3\,\%), 32.29\,s\mbox{ }(?\@\,\%) & 9.50\,s\mbox{ }(5.8\,\%), timeout & 9.51\,s\mbox{ }(0\,\%), 27.72\,s\mbox{ }(?\@\,\%)\\
& $t_\text{max}=60$\,s & 9.69\,s\mbox{ }(1.3\,\%), 24.12\,s\mbox{ }(?\@\,\%) & 9.45\,s\mbox{ }(5.8\,\%), timeout & 9.77\,s\mbox{ }(0\,\%), 66.47\,s\mbox{ }(?\@\,\%)\\
& $t_\text{max}=240$\,s & 9.69\,s\mbox{ }(1.3\,\%), 30.42\,s\mbox{ }(?\@\,\%) & 9.45\,s\mbox{ }(5.8\,\%), timeout & 9.49\,s\mbox{ }(0\,\%), 133.81\,s\mbox{ }(?\@\,\%)\\\addlinespace[0.75em]
\multirow[t]{3}{2.75cm}[0pt]{Active set upper bound (SCS)}  & $t_\text{max}=20$\,s & 9.44\,s\mbox{ }(1.3\,\%), 35.66\,s\mbox{ }(?\@\,\%) & 9.67\,s\mbox{ }(5.8\,\%), 84.43\,s\mbox{ }(?\@\,\%) & 9.09\,s\mbox{ }(0\,\%), 28.76\,s\mbox{ }(?\@\,\%)\\
& $t_\text{max}=60$\,s & 9.44\,s\mbox{ }(1.3\,\%), 54.60\,s\mbox{ }(?\@\,\%) & 9.11\,s\mbox{ }(5.8\,\%), 155.51\,s\mbox{ }(?\@\,\%) & 9.29\,s\mbox{ }(0\,\%), 70.57\,s\mbox{ }(?\@\,\%)\\
& $t_\text{max}=240$\,s & 8.89\,s\mbox{ }(1.3\,\%), 75.43\,s\mbox{ }(?\@\,\%) & 9.01\,s\mbox{ }(5.8\,\%), timeout & 9.15\,s\mbox{ }(0\,\%), 177.64\,s\mbox{ }(?\@\,\%)
\end{tabular}
\end{adjustbox}

\bigskip

\caption{\label{tab:second_problems} The individual timings for each algorithm and choice of settings on problems (4)--(6). See \Cref{tab:first_problems} for a description of the quantities shown. Here, ``error'' means the solution was not obtained due to an error (such as running out of memory).}
\end{table*}

\section{Properties of guesswork with quantum side information}

\subsection{Concavity of the guesswork}\label{sec:concavity}
\begin{proposition}\label{prop:guesswork_concave}
For each cost vector $\vec c$ and $K \leq |\cX|$, the function
\begin{equation} \label{eq:Guesswork_fn_rho}
 \rho_{XB} \mapsto G_{\vec c}(X|B)_\rho
\end{equation}
from the set of c-q states of the form \eqref{eq:cq-state} to $\mathbb{R}_{\geq 0}\cup \{\infty\}$, is concave. 
\end{proposition}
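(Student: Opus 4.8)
The plan is to deduce concavity directly from the primal semi-definite representation \eqref{eq:Ec-SDP} of $G_{\vec c}(X|B)_\rho$, exploiting that the cost operators depend linearly on the c-q state while the feasible set of POVMs does not depend on it at all; a minimum of affine functions is concave. First I would record that the $x$-block of a c-q state is $\langle x|\rho_{XB}|x\rangle=p_X(x)\rho_B^x$, so that $R_{\vec g}=\sum_{x\in\cX}c_{N(\vec g,x)}\langle x|\rho_{XB}|x\rangle$ is manifestly linear in $\rho_{XB}$. The set of c-q states of the form \eqref{eq:cq-state} is convex, since a convex combination $\lambda\rho_{XB}+(1-\lambda)\sigma_{XB}$ is again of that form (with classical marginal $\lambda p_X+(1-\lambda)q_X$); denoting by $R_{\vec g},\widetilde R_{\vec g},R'_{\vec g}$ the cost operators of $\rho_{XB},\lambda\rho_{XB}+(1-\lambda)\sigma_{XB},\sigma_{XB}$ respectively, linearity gives $\widetilde R_{\vec g}=\lambda R_{\vec g}+(1-\lambda)R'_{\vec g}$.

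Granting this, the conclusion follows from the usual ``infimum of affine functions'' argument: for every fixed POVM $\{E_{\vec g}\}_{\vec g\in\cX^K_{\neq}}$,
\begin{multline*}
\sum_{\vec g}\tr[\widetilde R_{\vec g}E_{\vec g}]
=\lambda\sum_{\vec g}\tr[R_{\vec g}E_{\vec g}]+(1-\lambda)\sum_{\vec g}\tr[R'_{\vec g}E_{\vec g}]\\
\geq\lambda\,G_{\vec c}(X|B)_\rho+(1-\lambda)\,G_{\vec c}(X|B)_\sigma,
\end{multline*}
the inequality being \eqref{eq:Ec-SDP} applied to $\rho$ and to $\sigma$ separately. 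Taking the infimum of the left-hand side over POVMs and using \eqref{eq:Ec-SDP} once more yields $G_{\vec c}(X|B)_{\lambda\rho+(1-\lambda)\sigma}\geq\lambda\,G_{\vec c}(X|B)_\rho+(1-\lambda)\,G_{\vec c}(X|B)_\sigma$, i.e.\ concavity.

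The one point requiring care — and the only real obstacle — is the regime $K<|\cX|$ with $c_\infty=\infty$, where the $R_{\vec g}$ are not finite matrices and the values may be $\infty$. I would handle it via the finiteness criterion stated just below \eqref{eq:Ec-SDP}: $G_{\vec c}(X|B)_\rho<\infty$ iff there is a POVM $\{E_{\vec g}\}$ with $\tr[E_{\vec g}\rho_B^x]=0$ for every $\vec g$ and every $x\notin\vec g$. Since $\lambda p_X(x)\rho_B^x$ and $(1-\lambda)q_X(x)\sigma_B^x$ are positive semi-definite and add up to the $x$-block of the mixture, any POVM witnessing finiteness for $\lambda\rho+(1-\lambda)\sigma$ simultaneously witnesses it for $\rho$ and for $\sigma$ (on the supports of $p_X$ and $q_X$; blocks with zero probability are simply dropped from all sums). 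Hence if $G_{\vec c}(X|B)_\rho=\infty$ or $G_{\vec c}(X|B)_\sigma=\infty$ then $G_{\vec c}(X|B)_{\lambda\rho+(1-\lambda)\sigma}=\infty$ as well, so the concavity inequality holds trivially; and when all three quantities are finite, one restricts to the honest cost matrices indexed by $\cX^K_{\neq}$ and the computation of the second paragraph goes through verbatim. I do not anticipate any further difficulty, the crux being simply the state-linearity of $R_{\vec g}$ together with the state-independence of the POVM constraints.
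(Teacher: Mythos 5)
Your argument is correct and is essentially the paper's proof: both rest on the observation that $R_{\vec g}$ is linear in $\rho_{XB}$ while the POVM constraint set is state-independent, so $G_{\vec c}(X|B)_\rho$ is an infimum of linear functionals of the state and hence concave. Your extra care with the extended-real-valued regime ($K<|\cX|$, $c_\infty=\infty$) is a small refinement that the paper's proof leaves implicit, and your finiteness argument there is sound.
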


\begin{proof}
For $\vec g \in \cX^K_{\neq}$, and $\rho_{XB}$ a c-q state, the quantity $R_{\vec g}^\rho :=  \sum_{x\in \cX}p_X(x)  c_{N(\vec g, x)} \rho_B^x$ can be expressed as
\begin{equation}
R_{\vec g}^\rho = \tr_X\!\left[\left(\sum_{x\in \cX}c_{N(\vec g, x)}  |x\rangle \!\langle x|_X \otimes I_B \right)\rho_{XB}\right]
\label{eq:alt-form-R-g}
\end{equation}
and hence is linear in $\rho_{XB}$.
Then for each POVM $(E_{\vec g})_{\vec g \in \cX^K_{\neq}}$, 
\begin{equation}
\rho_{XB} \mapsto \sum_{\vec g \in \cX^K_{\neq}} \tr[ R_{\vec g}^\rho E_{\vec g}]
\end{equation}
is linear in $\rho_{XB}$. The arbitrary infimum of concave functions, and in particular linear functions, is concave, and hence 
\begin{equation}
 G_{\vec c}(X|B)_\rho\equiv \min_{(E_{\vec g})_{\vec g \in \cX^K_{\neq}}} \sum_{\vec g \in \cX^K_{\neq}} \tr[R_{\vec g}^\rho E_{\vec g}],
 \end{equation}
 where the minimum is taken over all POVMs on system $B$ with outcomes in $\cX^K_{\neq}$,
 is concave.
\end{proof}
\begin{remark}
\Cref{prop:guesswork_concave} carries over to guesswork without side information, $G(X)$, which simply corresponds to the case that $\rho_B^x \equiv \rho_B$ is independent of $x \in \cX$. Since $G(X)$ is manifestly symmetric under permutations of the density $p_X$, this proves that $G(X)$ is a \emph{Schur concave} function of the distribution $p_X$ ({i.e.}, decreasing in the majorization pre-order; see, e.g., \cite{MOA11} for an overview of majorization and Schur concave functions). Consequently, the work \cite{HD18} provides an algorithm to calculate local continuity bounds for $G(X)$.
\end{remark}

\subsection{Continuity of the guesswork}\label{sec:Lipschitz}

\begin{proposition}\label{prop:guesswork_Lipschitz}
For each cost vector $\vec c$ and $K \leq |\cX|$, such that either $c_\infty < \infty$ or $K = |\cX|$, the function
\begin{equation}
 \rho_{XB} \mapsto G_{\vec c}(X|B)_\rho
\end{equation}
from the set of c-q states of the form \eqref{eq:cq-state} to $\mathbb{R}_{\geq 0}$, is Lipschitz continuous, satisfying the bound
\begin{equation}
|G_{\vec c}(X|B)_\rho - G_{\vec c}(X|B)_\sigma| \leq \kappa \|\rho_{XB}-\sigma_{XB}\|_1.
\end{equation}
for any c-q states $\rho_{XB}$ and $\sigma_{XB}$, where $\kappa = c_\infty$ if $K < |\cX|$, and $\kappa = c_{|\cX|}$ if $K = |\cX|$.
\end{proposition}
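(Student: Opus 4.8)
The plan is to exploit the dual SDP representation \eqref{eq:dual_SDP}, which expresses $G_{\vec c}(X|B)_\rho$ as a maximization of $\tr[Y]$ over a feasible set $\mathcal{F}_\rho := \{ Y = Y^\dagger : Y \leq R_{\vec g}^\rho \ \forall \vec g \in \cX^K_{\neq}\}$, where I write $R_{\vec g}^\rho = \sum_{x\in\cX} p_X(x) c_{N(\vec g, x)} \rho_B^x$ as in the proof of Proposition~\ref{prop:guesswork_concave}, and recall that $R_{\vec g}^\rho = \tr_X[(\sum_x c_{N(\vec g,x)} |x\rangle\!\langle x|_X \otimes I_B)\rho_{XB}]$ is linear in $\rho_{XB}$. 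Fix two c-q states $\rho_{XB}$ and $\sigma_{XB}$, and let $Y^\star$ be an optimal dual variable for $\rho$, so that $G_{\vec c}(X|B)_\rho = \tr[Y^\star]$ and $Y^\star \leq R_{\vec g}^\rho$ for every $\vec g$. The idea is to perturb $Y^\star$ by a suitable multiple of the identity, $Y' := Y^\star - t\, I_B$, chosen so that $Y'$ becomes feasible for the $\sigma$-problem; then $G_{\vec c}(X|B)_\sigma \geq \tr[Y'] = G_{\vec c}(X|B)_\rho - t\, d_B$, and by symmetry one gets the matching bound with $\rho$ and $\sigma$ exchanged, yielding $|G_{\vec c}(X|B)_\rho - G_{\vec c}(X|B)_\sigma| \leq t\, d_B$ for the appropriate $t$.

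The key step is to control how large $t$ must be. We need $Y^\star - tI_B \leq R_{\vec g}^\sigma$ for all $\vec g$, i.e.\ $Y^\star - R_{\vec g}^\sigma \leq tI_B$. Since $Y^\star \leq R_{\vec g}^\rho$, it suffices that $R_{\vec g}^\rho - R_{\vec g}^\sigma \leq tI_B$, which holds as soon as $t \geq \|R_{\vec g}^\rho - R_{\vec g}^\sigma\|_\infty$. Now $R_{\vec g}^\rho - R_{\vec g}^\sigma = \tr_X[(\sum_x c_{N(\vec g,x)} |x\rangle\!\langle x|_X \otimes I_B)(\rho_{XB} - \sigma_{XB})]$; writing $\Delta_{XB} := \rho_{XB} - \sigma_{XB}$ and $C_{\vec g} := \sum_x c_{N(\vec g,x)}|x\rangle\!\langle x|_X \otimes I_B$, we have $\|R_{\vec g}^\rho - R_{\vec g}^\sigma\|_\infty = \|\tr_X[C_{\vec g}\Delta_{XB}]\|_\infty \leq \|C_{\vec g}\Delta_{XB}\|_1 \leq \|C_{\vec g}\|_\infty \|\Delta_{XB}\|_1$, using that the partial trace is trace-norm contractive and that the operator norm dominates on the other factor in Hölder's inequality. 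Finally $\|C_{\vec g}\|_\infty = \max_x c_{N(\vec g,x)} \leq c_{|\cX|}$ when $K = |\cX|$ (every guess is eventually made, so the largest cost incurred is $c_{|\cX|}$), and $\leq c_\infty$ when $K < |\cX|$ (since then $N(\vec g,x) = \infty$ for some $x$ and $c_\infty$ is the largest finite value in the relevant range — here one must note $\kappa = c_\infty$ is finite by hypothesis, and indeed $c_\infty \geq c_{|\cX|-1} \geq \dots$, so $c_\infty$ bounds all entries of $C_{\vec g}$). Thus $t = \kappa \|\rho_{XB} - \sigma_{XB}\|_1$ works, and $d_B \cdot$ — wait, this gives a factor $d_B$, so I need to be more careful.

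The fix for the spurious $d_B$ is to \emph{not} perturb by $tI_B$ but to argue directly at the level of optimal values. A cleaner route: observe that for any $\vec g$, the feasible region translation $\mathcal{F}_\sigma \supseteq \{Y - (R_{\vec g}^\rho - R_{\vec g}^\sigma) : Y \in \mathcal{F}_\rho\}$ is false in general since different $\vec g$ give different shifts; so instead use that $G_{\vec c}$ is an infimum of the linear functionals $\rho \mapsto \sum_{\vec g} \tr[R_{\vec g}^\rho E_{\vec g}]$ over POVMs $(E_{\vec g})$, directly from the primal \eqref{eq:Ec-SDP}. For a fixed POVM, $|\sum_{\vec g}\tr[R_{\vec g}^\rho E_{\vec g}] - \sum_{\vec g}\tr[R_{\vec g}^\sigma E_{\vec g}]| = |\sum_{\vec g}\tr[C_{\vec g}\Delta_{XB}(|{\cdot}\rangle\text{-lifted }E_{\vec g})]| $; more transparently, $\sum_{\vec g}\tr[R_{\vec g}^\rho E_{\vec g}] = \tr[(\sum_{\vec g} C_{\vec g}\otimes E_{\vec g}^{T?})\dots]$ — the tidiest statement is that this functional equals $\tr[\Lambda(E)\,\rho_{XB}]$ for the channel-like map $\Lambda(E) = \sum_{\vec g}(\sum_x c_{N(\vec g,x)}|x\rangle\!\langle x|_X \otimes E_{\vec g})$, which is a positive operator with $\|\Lambda(E)\|_\infty \leq \max_{\vec g, x} c_{N(\vec g,x)}\cdot\|\sum_{\vec g} E_{\vec g}\|_\infty = \kappa$, since the $E_{\vec g}$ sum to $I_B$ and the $X$-blocks are orthogonal. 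Hence $|\tr[\Lambda(E)(\rho_{XB}-\sigma_{XB})]| \leq \|\Lambda(E)\|_\infty \|\rho_{XB}-\sigma_{XB}\|_1 \leq \kappa\|\rho_{XB}-\sigma_{XB}\|_1$, uniformly in $E$. Taking the infimum over $E$ on both sides and using that $|\inf_E f(E) - \inf_E g(E)| \leq \sup_E |f(E) - g(E)|$ gives the claimed Lipschitz bound. The main obstacle, as the above false start shows, is routing the estimate through a representation where the perturbation bound does \emph{not} pick up a dimension factor; using the primal form with the operator-norm bound $\|\Lambda(E)\|_\infty \leq \kappa$ (rather than translating dual feasible points by multiples of $I_B$) is what achieves this, and verifying $\|\Lambda(E)\|_\infty \leq \kappa$ carefully — in particular checking the two cases $K = |\cX|$ and $K < |\cX|$ for the bound on $\max_{\vec g,x} c_{N(\vec g,x)}$ — is the one genuinely case-dependent point.
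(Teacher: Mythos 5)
Your final argument is correct and is essentially the paper's own proof: you write the primal objective for a fixed POVM as $\tr[\Lambda(E)\,\rho_{XB}]$ with $\Lambda(E)=\sum_{x}\sum_{\vec g}c_{N(\vec g,x)}|x\rangle\!\langle x|_X\otimes E_{\vec g}$ (the paper's $F_{XB}$), bound $\|\Lambda(E)\|_\infty\leq\kappa$ using $c_{N(\vec g,x)}\leq\kappa$ and $\sum_{\vec g}E_{\vec g}=\one_B$, apply H\"older, and conclude via $|\inf_E f-\inf_E g|\leq\sup_E|f-g|$, exactly as in the paper. You were also right to discard the dual-perturbation route, since shifting by $t\,\one_B$ costs $t\,d_B$ in the trace and introduces a spurious dimension factor.
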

\begin{proof}
Define
\begin{equation}
f(\rho_{XB}, \{E_{\vec g}\}_{\vec g \in \cX^K_{\neq}}) :=  \sum_{\vec g \in \cX^K_{\neq}} \tr[ R_{\vec g}^\rho E_{\vec g}].
\end{equation}
Then, by linearity (as discussed in the proof of \Cref{prop:guesswork_concave}),
\begin{align}
& f(\rho_{XB}, \{E_{\vec g}\}_{\vec g \in \cX^K_{\neq}})-f(\sigma_{XB}, \{E_{\vec g}\}_{\vec g \in \cX^K_{\neq}}) \notag \\
& \qquad = \sum_{\vec g \in \cX^K_{\neq}} \tr[ R_{\vec g}^{\rho-\sigma} E_{\vec g}]\\
& \qquad = \sum_{\vec g \in \cX^K_{\neq}} \tr[ \tr_X[C^{(\vec g)}_{XB} \Delta_{XB}] E_{\vec g}]
\end{align}
using \eqref{eq:alt-form-R-g}, where $C_{XB}^{(\vec g)} := \sum_{x\in \cX}c_{N(\vec g, x)}  |x\rangle \!\langle x| \otimes I_B\geq 0$ and $\Delta_{XB} := \rho_{XB}-\sigma_{XB}$. Since $C_{XB}^{(\vec g)}$ and $\Delta_{XB}$ commute, using the c-q structure of each, we have
\begin{align}
    & f(\rho_{XB}, \{E_{\vec g}\}_{\vec g \in \cX^K_{\neq}})-f(\sigma_{XB}, \{E_{\vec g}\}_{\vec g \in \cX^K_{\neq}})\notag \\
    & \qquad =\sum_{\vec g \in \cX^K_{\neq}} \tr[ C^{(\vec g)}_{XB} \Delta_{XB} (I_X \otimes E_{\vec g})] \\
    & \qquad = \tr\!\left[ \Delta_{XB}\sum_{\vec g \in \cX^K_{\neq}}C^{(\vec g)}_{XB} (I_X\otimes E_{\vec g})\right].
\end{align}
Set
\begin{equation}
     F_{XB} := \sum_{\vec g \in \cX^K_{\neq}}C^{(\vec g)}_{XB} (I_X\otimes E_{\vec g})  =\!\!\sum_{x\in \cX}\sum_{\vec g \in \cX^K_{\neq}}  c_{N(\vec g, x)}  |x\rangle \!\langle x| \otimes E_{\vec g}.
\end{equation}
Since $c_{N(\vec g, x)}\leq \kappa$ for each $x \in \cX$ and $\vec g \in \cX^K_{\neq}$, we have that $F_{XB} \leq \kappa \sum_{x\in \cX}\sum_{\vec g \in \cX^K_{\neq}}  |x\rangle \!\langle x| \otimes E_{\vec g} $ in semi-definite order. Performing the sums, we have $F_{XB} \leq \kappa \, I_X \otimes I_B$ and hence $\|F_{XB}\|_\infty \leq \kappa$. Thus,

\begin{align}
    & f(\rho_{XB}, \{E_{\vec g}\}_{\vec g \in \cX^K_{\neq}})-f(\sigma_{XB}, \{E_{\vec g}\}_{\vec g \in \cX^K_{\neq}}) \notag \\
    & \qquad = \tr\!\left[ \Delta_{XB} F_{XB}\right]\\
    & \qquad \leq \|  \Delta_{XB} F_{XB}\|_1\\
    & \qquad \leq \|\Delta_{XB}\|_1 \,\|F_{XB}\|_\infty\\
    & \qquad \leq \kappa \|\rho_{XB}-\sigma_{XB}\|_1
\end{align}
using H\"older's inequality in the second to last inequality.
Swapping $\rho_{XB}$ and $\sigma_{XB}$ completes the proof.
\end{proof}

\section{Simple examples}\label{sec:example}

\subsection{BB84 states as side information}

\begin{figure} 
    \centering
    \includegraphics[width=0.48\textwidth]{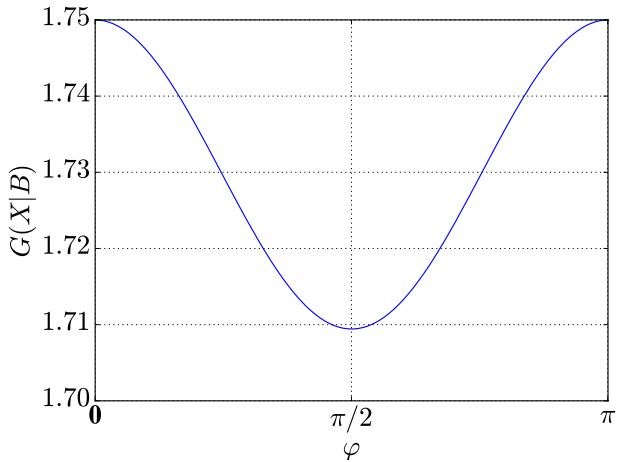}
    \caption{The guesswork $G(X|B)$ as a function of the parameter $\varphi$, when $X$ is uniformly distributed over $\{1,2,3,4\}$ and the corresponding side information states are $\{ \ket{0}, \ket{1}, \ket{\psi(\varphi)}, \ket{\psi(-\varphi)}\}$, where $\ket{\psi(\varphi)} = \cos (\varphi/2) \ket{0} + \sin (\varphi/2) \ket{1}$. We see that for classical states, i.e., when $\varphi = 0$ or $\varphi = \pi$, we obtain a maximum value of 1.75. For the BB84 states ($\varphi = \pi/2$), we achieve a minimum.}
    \label{fig:bb84-type-general-plot}
\end{figure}

As an example, we consider the problem of calculating guesswork when one has four uniformly distributed letters to guess from, each correlated to one of the four BB84 states~\cite{BB84}. That is,
\begin{equation}
    \rho_{XB} = \frac{1}{4} \sum_{k = 1}^4 |x_k\rangle \!\langle x_k|_X \otimes |\psi_k\rangle \!\langle \psi_k|_B
\end{equation}
with the four $\ket{\psi_k}$'s being chosen from $\left\{ \ket{0}, \ket{1}, \ket{+}, \ket{-} \right\}$. This example is firmly in the quantum realm of guesswork, as more information about the side information system $B$ can be obtained via a quantum measurement than a classical one (in the computational basis, that is). 

We establish an analytic upper bound on the guesswork by considering a particular POVM and associated sequences of guesses. We consider the POVM consisting of two orthogonal projectors $|\theta\rangle \!\langle \theta|$ and $|\theta^{\perp}\rangle \!\langle \theta^{\perp}|$ with $\ket{\theta} := \sin \theta \ket{0} + \cos \theta \ket{1}$. If the outcome corresponding to $|\theta\rangle \!\langle \theta|$ is obtained, then we guess in the order corresponding to $(1, +, -, 0)$. Similarly, the guessing order corresponding to the other outcome $\theta^{\perp}$ is $(0, -, +, 1)$, which is the reverse order.

By calculating the objective function of \eqref{eq:Ec-SDP}, we obtain
\begin{align} \label{eq:bb84-guesswork}
G(X|B) &\leq \begin{multlined}[t][5cm] \frac{1}{2} \Big( 1 \cdot \cos^2 \theta + 2 \cdot \frac{1}{2} \left(1 + \sin 2 \theta \right) +  \\ 3 \cdot \frac{1}{2} \left( 1 - \sin 2 \theta \right) + 4 \cdot \sin^2 \theta \Big) \end{multlined} \\
					   &= 1.75 + \frac{3}{2} \sin^2 \theta - \frac{1}{4} \sin 2 \theta.
\end{align}
With the aim of minimizing the guesswork, we choose $\theta = \frac{1}{2} \arctan{\frac{1}{3}}$, and obtain the right-hand side of \eqref{eq:bb84-guesswork} as $\frac{1}{4} \left( 10 - \sqrt{10} \right) \approx 1.709430$. Moreover, the SDP in \eqref{eq:Ec-SDP} can be solved numerically to obtain the same value, providing a matching numerical lower bound; see \cite{guesswork_code} for the code involved, including a high-precision demonstration using the SDP solver SDPA-GMP \cite{YFN+10} showing agreement to 200 digits.

We also consider a generalization of this example, where the side information states are chosen from the set $\{ \ket{0}, \ket{1}, \ket{\psi(\varphi)}, \ket{\psi(-\varphi)}\}$ where $\ket{\psi(\varphi)} = \cos (\varphi/2) \ket{0} + \sin (\varphi/2) \ket{1}$. The BB84 states are a special case of this ensemble with $\varphi = \pi/2$. For each of these ensembles, we compute the guesswork using our SDP formulation in \eqref{eq:Ec-SDP}. The results are shown in Figure~\ref{fig:bb84-type-general-plot}.

Furthermore, we can use this example to delineate the difference, or gap, between guesswork with classical and quantum information. There are two ways of reducing the example of BB84 states to a classical setting: (a) restricting to measurements in the standard basis $\{ \ket{0}, \ket{1} \}$, or (b) replacing the side information states $\ket{+}$ and $\ket{-}$ with the maximally mixed qubit state $\pi_2$. In both of these cases, the side information then takes the form of a random variable $Y$. The joint probability distribution of variables $XY$ is given as follows:

\begin{table}[H]
\centering
\begin{tabular}{| l | l | l |}
\hline
\diagbox{$X$}{$Y$} & 0  & 1  \\
\hline
                    $x_1$   & 1/4 & 0  \\
                     $x_2$  & 1/8 & 1/8  \\
                     $x_3$  & 1/8 & 1/8  \\
                     $x_4$  & 0 & 1/4 \\
                     \hline
\end{tabular}
\end{table}

Given either value of $Y$, one needs an average of $1.75$ guesses. Hence in this classical analogue of the BB84 states, the guesswork is $1.75$. This is higher than the lower value ($\approx 1.709$) that can be achieved by quantum measurements.

\subsection{Qubit trine states as side information}

In this example, we consider the problem of calculating guesswork with three uniformly distributed letters to guess from, each correlated to one of the three qubit trine states \cite{Holevo1973}. That is,
\begin{equation}
    \rho_{XB} = \frac{1}{3} \sum_{k = 1}^3 |x_k\rangle \!\langle x_k|_X \otimes |\psi_k\rangle \!\langle \psi_k|_B
\end{equation}
where $\ket{\psi_k} = \cos\left(k \tfrac{2\pi}{3}\right) \ket{0} + \sin\left(k \tfrac{2\pi}{3}\right) \ket{1}$.

As in the previous example of the BB84 states, here too we establish an analytic upper bound on the guesswork by considering a particular POVM. Consider that the measurement is characterized by two orthogonal projectors $|\theta\rangle \!\langle \theta|$ and $|\theta^{\perp}\rangle \!\langle \theta^{\perp}|$ with $\ket{\theta} := \cos \theta \ket{0} + \sin \theta \ket{1}$. For the sake of simplicity, we restrict to $\theta \in [0, \pi/2]$. For the outcome corresponding to $\theta$, we guess in the order corresponding to $(\psi_2, \psi_3, \psi_1)$, and for the outcome corresponding to $\theta^\perp$, we guess in the order corresponding to $(\psi_1, \psi_3, \psi_2)$.

The objective function in \eqref{eq:Ec-SDP} leads us to
\begin{equation} \label{eq:ystate-guesswork}
\begin{split}
G(X|B) &\leq \frac{1}{2} \cdot \frac{2}{3} \Big( 1 \cdot \left(\cos^2 (\theta - 4\pi/3)  + \sin^2 (\theta - 2\pi/3) \right) \\ & \qquad +  2\cdot \left( \cos^2 \theta + \sin^2 \theta \right) \\ & \qquad +  3 \cdot  \left( \cos^2 (\theta - 2\pi/3)  + \sin^2 (\theta - 4\pi/3) \right) \Big) \\
					   &= \frac{4}{3} + \frac{2}{3} \left( \cos^2 (\theta - 2\pi/3) + \sin^2 (\theta - 4\pi/3)  \right).
\end{split}
\end{equation}

The guesswork $G(X|B)$ is minimized by setting $f'(\theta) = 0$ where $f(\theta) =  \cos^2 (\theta - 2\pi/3) + \sin^2 (\theta - 4\pi/3)$. This leads to $\theta = \pi/4$ and $G(X|B) = (2 - 1/\sqrt{3}) \approx 1.422649$. The SDP \eqref{eq:Ec-SDP} is solved for this example as well, and the numerical result shows agreement with the analytic upper bound up to a relative tolerance of at least $10^{-7}$.

Further, in this example, if we restrict to measuring in the standard basis (mimicking the classical analogue of the side information), then the average number of guesses needed is $1.5$, in contrast to $\approx1.4227$ guesses needed using the optimal quantum measurement.

\section{Guesswork as a security criterion: certifying an imperfect key state}\label{sec:security}

A primitive in any cryptography scheme is the establishment of a secret key between two communicating parties. Quantum key distribution (QKD) protocols can produce a certifiably secure secret key by using pre-shared entanglement \cite{Ekert:1991:661}. However, if the protocol is not implemented perfectly, as is the case in realistic scenarios, then some information can leak out to an eavesdropper. How secure is the key obtained in this ``imperfect'' scenario? In other words, if there is a small deviation from the ideal protocol, how does it affect the security of  the key? We address this question considering the guesswork as a security criterion.

Consider two systems $K$ and $E$, where $K$ denotes the key system and encodes the secret key, and $E$ is the system held by the eavesdropper. An ideal key state is of the form $\pi_K \otimes \rho_E$ where $\pi_K$ refers to the maximally mixed state on the key system. This means that the eavesdropper can learn nothing about the key with access to the $E$ system alone. An imperfect key, generally, is the joint state $\rho_{KE}$. Consider the promise that the imperfect key state is $\varepsilon$-close to an ideal one in normalized trace distance:
\begin{equation} \label{eq:promise}
    \frac{1}{2} \Vert \rho_{KE} - \pi_K \otimes \rho_E \Vert_1 \leq \varepsilon. 
\end{equation}
For an ideal key state, the expected guesswork for the eavesdropper is $\sum_{k} \frac{1}{|K|} k = \frac{|K|+1}{2}$ where $|K|$ indicates the cardinality of the alphabet associated to system $K$. We have the following result:
\begin{theorem}\label{thm:imperfect-key}
For an imperfect key state satisfying the promise in \eqref{eq:promise},  the following bound on guesswork holds
\begin{equation}\label{eq:robustness-bound}
    G(K|E) \geq \frac{|K| + 1 }{2}  -  |K| \varepsilon.
\end{equation}
\end{theorem}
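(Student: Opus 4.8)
The plan is to treat the ideal key state $\sigma_{KE} := \pi_K \otimes \rho_E$ as a reference point and to control how much the guesswork changes under the perturbation $\Delta_{KE} := \rho_{KE} - \sigma_{KE}$, which by \eqref{eq:promise} satisfies $\|\Delta_{KE}\|_1 \leq 2\varepsilon$. This is the same circle of ideas as the proof of \Cref{prop:guesswork_Lipschitz}, but the constant must be sharpened by a factor of two by exploiting that $\Delta_{KE}$ is traceless. Throughout, $K = |\cX|$, the cost vector is $\vec c = (1,2,\dotsc,|K|)$, and by the discussion around \eqref{eq:def_X_K} the relevant POVM outcome set is the set of permutations of the key alphabet.

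First I would evaluate $G(K|E)_\sigma$ exactly. For a permutation $\vec g$ the map $k \mapsto N(\vec g, k)$ is a bijection onto $\{1,\dotsc,|K|\}$, so $\sum_{k} c_{N(\vec g, k)} = \sum_{j=1}^{|K|} j = \tfrac{|K|(|K|+1)}{2}$; since $\sigma_E^k = \rho_E$ for all $k$ and the $K$-marginal of $\sigma_{KE}$ is uniform, the cost operator of \eqref{eq:Ec-SDP} is
\begin{equation}
R_{\vec g}^{\sigma} = \sum_{k} \tfrac{1}{|K|}\, c_{N(\vec g, k)}\, \rho_E = \tfrac{|K|+1}{2}\,\rho_E ,
\end{equation}
independently of $\vec g$. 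Hence $\sum_{\vec g}\tr[R_{\vec g}^\sigma E_{\vec g}] = \tfrac{|K|+1}{2}\tr[\rho_E] = \tfrac{|K|+1}{2}$ for every POVM $\{E_{\vec g}\}$, so $G(K|E)_\sigma = \tfrac{|K|+1}{2}$, recovering the value quoted in the text for an ideal key state.

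Next, fix any POVM $\{E_{\vec g}\}$ feasible for \eqref{eq:Ec-SDP}. Writing $R_{\vec g}^\rho$ in the linear form \eqref{eq:alt-form-R-g}, exactly as in the proof of \Cref{prop:guesswork_Lipschitz}, gives
\begin{equation}
\sum_{\vec g}\tr[R_{\vec g}^\rho E_{\vec g}] - \sum_{\vec g}\tr[R_{\vec g}^\sigma E_{\vec g}] = \tr[\Delta_{KE}\, F_{KE}],
\end{equation}
where $F_{KE} := \sum_{k}\sum_{\vec g} c_{N(\vec g, k)}\,|k\rangle \!\langle k|_K \otimes E_{\vec g}$ obeys $0 \leq F_{KE} \leq |K|\, I_{KE}$ because $c_{N(\vec g, k)} \leq |K|$. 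The key extra step — the one buying the factor of two over a black-box use of \Cref{prop:guesswork_Lipschitz} — is to center $F_{KE}$: write $F_{KE} = \tfrac{|K|}{2} I_{KE} + G_{KE}$ with $\|G_{KE}\|_\infty \leq \tfrac{|K|}{2}$, and use $\tr[\Delta_{KE}] = \tr[\rho_{KE}] - \tr[\sigma_{KE}] = 0$ to obtain $\tr[\Delta_{KE} F_{KE}] = \tr[\Delta_{KE} G_{KE}]$. H\"older's inequality then yields $|\tr[\Delta_{KE} F_{KE}]| \leq \|\Delta_{KE}\|_1\, \|G_{KE}\|_\infty \leq \tfrac{|K|}{2}\cdot 2\varepsilon = |K|\varepsilon$.

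Combining the two displays, $\sum_{\vec g}\tr[R_{\vec g}^\rho E_{\vec g}] \geq \sum_{\vec g}\tr[R_{\vec g}^\sigma E_{\vec g}] - |K|\varepsilon \geq G(K|E)_\sigma - |K|\varepsilon = \tfrac{|K|+1}{2} - |K|\varepsilon$; since this holds for every POVM, taking the infimum over $\{E_{\vec g}\}$ gives \eqref{eq:robustness-bound}. The only genuinely new ingredient beyond the SDP bookkeeping of \Cref{sec:SDP} is the centering trick using $\tr\Delta_{KE} = 0$, which is also the main (and essentially only) obstacle: a direct appeal to \Cref{prop:guesswork_Lipschitz} would give only the weaker bound $\tfrac{|K|+1}{2} - 2|K|\varepsilon$.
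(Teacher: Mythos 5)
Your proof is correct, but it takes a genuinely different route from the paper. The paper proves \eqref{eq:robustness-bound} by first establishing a classical-side-information extension of Pliam's inequality (\Cref{lem:pliam-side-info}), then invoking \Cref{thm:ordered-from-adaptive} to reduce the quantum problem to the measured one, and finally using monotonicity of the trace distance under the measurement channel to transfer the promise \eqref{eq:promise} to the induced classical distributions. You instead work entirely at the level of the primal SDP \eqref{eq:Ec-SDP}: you compute $G(K|E)_\sigma = \tfrac{|K|+1}{2}$ exactly (noting that $R_{\vec g}^\sigma$ is independent of the permutation $\vec g$), and then perturb. The decisive step is the centering of the operator $F_{KE}$ at $\tfrac{|K|}{2} I$ combined with $\tr[\Delta_{KE}]=0$, which sharpens the H\"older estimate from $\kappa\|\Delta\|_1 = 2|K|\varepsilon$ down to $\tfrac{|K|}{2}\|\Delta\|_1 = |K|\varepsilon$ — exactly the factor-of-two gap that the paper's own remark points out between \Cref{prop:guesswork_Lipschitz} and \eqref{eq:robustness-bound}. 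Your argument is self-contained (it needs neither Pliam's inequality nor the classical lemma), it actually yields the stronger two-sided estimate $|G(K|E)_\rho - \tfrac{|K|+1}{2}| \leq |K|\varepsilon$ since the $\sigma$-objective is constant over all POVMs, and it generalizes readily to arbitrary cost vectors with $\tfrac{|K|}{2}$ replaced by $\tfrac{c_{|K|}-c_1}{2}$; what it does not provide is the standalone classical statement of \Cref{lem:pliam-side-info}, which the paper records as a result of independent interest.
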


\begin{proof}
We apply the result of Lemma~\ref{lem:pliam-side-info} below, which holds for the case of guesswork with classical side information.
We know from Theorem \ref{thm:ordered-from-adaptive} that a measured strategy for guesswork is equivalent to a quantum strategy. Using that fact, and by combining the promise $\frac{1}{2} \Vert \rho_{KE} - \pi_K \otimes \rho_E \Vert_1 \leq \varepsilon$ and the result in Lemma \ref{lem:pliam-side-info}, we have \eqref{eq:robustness-bound}.
\end{proof}

\Cref{thm:imperfect-key} provides a robustness guarantee that imperfect key states continue to have near-maximal guesswork, if they remain close to an ideal key state in normalized trace distance.

Our proof of the lower bound in \eqref{eq:robustness-bound}, as given above, is a consequence of the following extension of an analogous result pertaining to guesswork, due to Pliam~\cite[Theorem~3]{Pliam1998}. Pliam's inequality states that for any random variable $X$ with probability distribution $p_X$, 
\begin{equation} \label{eq:pliam}
	\frac{| \mathcal{X} |+1}{2} - G(X) \leq \frac{1}{2} | \mathcal{X} | \, \Vert p_X - u_X \Vert_1,
\end{equation}
where $G(X)$ denotes the guesswork and $u_X$ denotes the uniform distribution.

\begin{lemma} \label{lem:pliam-side-info}
For random variables $X$ and $Y$, the following bound holds for the guesswork:
\begin{equation} 
\frac{\left\vert \mathcal{X}\right\vert +1}{2}-G(X|Y)\leq\frac{\left\vert
\mathcal{X}\right\vert }{2}\left\Vert p_{XY}-u_{X}\otimes p_{Y}\right\Vert
_{1}.
\end{equation}
\end{lemma}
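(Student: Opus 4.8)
The plan is to reduce the lemma to Pliam's inequality \eqref{eq:pliam}, applied conditionally on each value of $Y$. First I would observe that, for a classical strategy obeying the Markov chain $X - Y - \vec G$, the objective $E_{\vec c}(N(\vec G, X))$ with $\vec c = (1,2,\dotsc,|\cX|)$ decomposes as $\sum_{y\in\cY} p_Y(y)\,\bE[N(\vec G, X)\mid Y=y]$, and for each fixed $y$ the inner expectation is minimized \emph{independently} by guessing the letters of $\cX$ in non-increasing order of $p_{X|Y}(\cdot|y)$ (Arikan's observation, recalled in \Cref{sec:classical_strategies}). Hence $G(X|Y)_p = \sum_{y\in\cY} p_Y(y)\,G(X_y)$, where $X_y$ denotes a random variable distributed on $\cX$ according to the conditional law $p_{X|Y}(\cdot|y)$.

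Next I would apply Pliam's inequality \eqref{eq:pliam} to each $X_y$, obtaining $\frac{|\cX|+1}{2} - G(X_y) \leq \frac{|\cX|}{2}\,\|p_{X|Y}(\cdot|y) - u_X\|_1$ for every $y\in\cY$. Multiplying through by $p_Y(y)\geq 0$, summing over $y$, and using $\sum_{y} p_Y(y)=1$ on the left-hand side gives $\frac{|\cX|+1}{2} - G(X|Y)_p \leq \frac{|\cX|}{2}\sum_{y\in\cY} p_Y(y)\,\|p_{X|Y}(\cdot|y) - u_X\|_1$.

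Finally I would rewrite the right-hand side: since $p_Y(y)\,\|p_{X|Y}(\cdot|y) - u_X\|_1 = \sum_{x\in\cX}\bigl|p_{XY}(x,y) - p_Y(y)\,u_X(x)\bigr|$, summing over $y$ recovers exactly $\|p_{XY} - u_X\otimes p_Y\|_1$, which completes the argument. The only step meriting any care is the conditional decomposition of $G(X|Y)_p$ in the first paragraph — namely that the per-$y$ optimizations are truly independent — but this is immediate from the definition of $\cS_K^{\textnormal{Classical}}(p_{XY})$ via the Markov chain $X - Y - \vec G$ (the strategy may depend on the data only through $y$), so no genuine obstacle arises; Pliam's inequality \eqref{eq:pliam} is invoked as a black box.
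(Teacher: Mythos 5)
Your proposal is correct and follows essentially the same route as the paper's proof: apply Pliam's inequality conditionally on each value $y$, average over $p_Y$, and rewrite the averaged total-variation term as $\|p_{XY}-u_X\otimes p_Y\|_1$. The only difference is that you explicitly justify the decomposition $G(X|Y)=\sum_y p_Y(y)\,G(X|Y=y)$ via the Markov-chain structure of classical strategies, a step the paper invokes without elaboration.
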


\begin{proof}
Consider the case of a joint distribution $p_{XY}$, with conditional distribution $p_{X|Y}$ and marginal distribution $p_{Y}$, and suppose that the value of $y$ is fixed. Then we can invoke Pliam's bound \eqref{eq:pliam} to find that%
\begin{equation}
\frac{\left\vert \mathcal{X}\right\vert +1}{2}-G(X|Y=y)\leq\frac{\left\vert
\mathcal{X}\right\vert }{2}\left\Vert p_{X|Y=y}-u_{X}\right\Vert _{1},
\end{equation}
where the notation $G(X|Y=y)$ indicates the guesswork (without side information) of a random variable distributed according to $p_{X|Y}(\cdot|y)$. Taking the expectation of both sides with respect to the random variable$~Y$,
we find that%
\begin{align}
&\frac{\left\vert \mathcal{X}\right\vert +1}{2}-\sum_{y}p_{Y}(y)G(X|Y=y) \\
& \qquad \leq\frac{\left\vert \mathcal{X}\right\vert }{2}\sum_{y}p_{Y}(y)\left\Vert
p_{X|Y=y}-u_{X}\right\Vert _{1}\\
& \qquad =\frac{\left\vert \mathcal{X}\right\vert }{2}\sum_{y}p_{Y}(y)\sum
_{x}\left\vert p_{X|Y}(x|y)-u_{X}(x)\right\vert \\
& \qquad =\frac{\left\vert \mathcal{X}\right\vert }{2}\sum_{y}\sum_{x}\left\vert
p_{X|Y}(x|y)p_{Y}(y)-u_{X}(x)p_{Y}(y)\right\vert \\
&\qquad  =\frac{\left\vert \mathcal{X}\right\vert }{2}\sum_{y}\sum_{x}\left\vert
p_{XY}(x,y)-u_{X}(x)p_{Y}(y)\right\vert \\
& \qquad =\frac{\left\vert \mathcal{X}\right\vert }{2}\left\Vert p_{XY}-u_{X}\otimes
p_{Y}\right\Vert _{1}.
\end{align}
Using the fact that%
\begin{equation}
\sum_{y}p_{Y}(y)G(X|Y=y)=G(X|Y),
\end{equation}
we can conclude the generalization of \eqref{eq:pliam} in the presence of classical side information%
\begin{equation} 
\frac{\left\vert \mathcal{X}\right\vert +1}{2}-G(X|Y)\leq\frac{\left\vert
\mathcal{X}\right\vert }{2}\left\Vert p_{XY}-u_{X}\otimes p_{Y}\right\Vert
_{1}.
\end{equation}
This concludes the proof.
\end{proof}


\begin{remark}
Note that \Cref{prop:guesswork_Lipschitz} gives the following continuity bound for the guesswork near  $\pi_K \otimes \rho_E$:
\begin{align}
    |G(K|E)_{\rho} - G(K|E)_{\pi\otimes \rho_E}| \leq 2 \eps |K| ,
\end{align}
 and hence
\begin{align}
    G(K|E)_{\rho} \geq 
    \frac{|K|+1}{2} - 2 |K| \eps.
\end{align}
Thus, the bound in \eqref{eq:robustness-bound} is slightly better than what we obtain by employing \Cref{prop:guesswork_Lipschitz}.
\end{remark}

\section{Open questions}

Guesswork presents an operationally-relevant method to quantify uncertainty, and has been relatively unexplored in the presence of quantum side information. We hope our investigation opens the door to further analysis of the guesswork and methods to compute it. In particular, our work leaves open the following questions:
\begin{enumerate}
	\item Does equality hold in \eqref{eq:relate-ce-to-asymptotic-measured-ce}? If so, the single-letter expression
	\begin{equation}
	    \lim_{n\to\infty} \frac{1}{n}\ln G(X^n|B^n)_{\rho^{\otimes n}} = \widetilde H_{\frac{1}{2}}^{\uparrow}(X | B)_{\rho}
	\end{equation}
	holds, matching the classical case \cite[Prop. 5]{Arikan1996}.
	\item Ref.~\cite{BFT17} presented variational expressions for the measured R\'enyi divegerences $D^M_\alpha$ and showed how those lead to efficient ways to compute the divergences. Are there similar variational formulas for $H^{\uparrow, M}_\alpha(X|Y)_\rho$? That could similarly provide an efficient way to compute the quantity.
\end{enumerate}

\section*{Acknowledgements} E.H. would like to thank Harsha Nagarajan for pointing out the transformation in \cite[Equations (22)--(24)]{BDNK19}. E.H.~is supported by the Cantab Capital Institute for the Mathematics of Information (CCIMI). V.K.~acknowledges support from the Louisiana State University Economic Development Assistantship. M.M.W.~acknowledges support from the US National Science Foundation through grant no.~1907615.

\bibliographystyle{IEEEtran}
\bibliography{guesswork}

\begin{thebibliography}{10}
\providecommand{\url}[1]{#1}
\csname url@samestyle\endcsname
\providecommand{\newblock}{\relax}
\providecommand{\bibinfo}[2]{#2}
\providecommand{\BIBentrySTDinterwordspacing}{\spaceskip=0pt\relax}
\providecommand{\BIBentryALTinterwordstretchfactor}{4}
\providecommand{\BIBentryALTinterwordspacing}{\spaceskip=\fontdimen2\font plus
\BIBentryALTinterwordstretchfactor\fontdimen3\font minus
  \fontdimen4\font\relax}
\providecommand{\BIBforeignlanguage}[2]{{%
\expandafter\ifx\csname l@#1\endcsname\relax
\typeout{** WARNING: IEEEtran.bst: No hyphenation pattern has been}%
\typeout{** loaded for the language `#1'. Using the pattern for}%
\typeout{** the default language instead.}%
\else
\language=\csname l@#1\endcsname
\fi
#2}}
\providecommand{\BIBdecl}{\relax}
\BIBdecl

\bibitem{Massey1994}
\BIBentryALTinterwordspacing
J.~Massey, ``Guessing and entropy,'' in \emph{Proceedings of 1994 {{IEEE
  International Symposium}} on {{Information Theory}}}.\hskip 1em plus 0.5em
  minus 0.4em\relax {Trondheim, Norway}: {IEEE}, 1994, p. 204. [Online].
  Available: \url{http://ieeexplore.ieee.org/document/394764/}
\BIBentrySTDinterwordspacing

\bibitem{book1991cover}
T.~M. Cover and J.~A. Thomas, \emph{Elements of Information Theory},
  2nd~ed.\hskip 1em plus 0.5em minus 0.4em\relax Wiley-Interscience, 2006.

\bibitem{Lundin2007}
R.~Lundin, T.~Holleboom, and S.~Lindskog, ``On the {{Relationship}} between
  {{Confidentiality Measures}}: {{Entropy}} and {{Guesswork}},'' in
  \emph{Proceedings of the 5th {{International Workshop}} on {{Security}} in
  {{Information Systems}}}, 2007, pp. 135--144.

\bibitem{Arikan1996}
\BIBentryALTinterwordspacing
E.~Arikan, ``An inequality on guessing and its application to sequential
  decoding,'' \emph{IEEE Transactions on Information Theory}, vol.~42, no.~1,
  pp. 99--105, Jan. 1996. [Online]. Available:
  \url{http://ieeexplore.ieee.org/document/481781/}
\BIBentrySTDinterwordspacing

\bibitem{Arikan1998}
\BIBentryALTinterwordspacing
E.~Arikan and N.~Merhav, ``Guessing subject to distortion,'' \emph{IEEE Trans.
  Inform. Theory}, vol.~44, no.~3, pp. 1041--1056, May 1998. [Online].
  Available: \url{http://ieeexplore.ieee.org/document/669158/}
\BIBentrySTDinterwordspacing

\bibitem{Arikan1998a}
\BIBentryALTinterwordspacing
------, ``Joint source-channel coding and guessing with application to
  sequential decoding,'' \emph{IEEE Trans. Inform. Theory}, vol.~44, no.~5, pp.
  1756--1769, Sep. 1998. [Online]. Available:
  \url{http://ieeexplore.ieee.org/document/705557/}
\BIBentrySTDinterwordspacing

\bibitem{Malone2004}
\BIBentryALTinterwordspacing
D.~Malone and W.~Sullivan, ``\BIBforeignlanguage{en}{Guesswork and
  {{Entropy}}},'' \emph{\BIBforeignlanguage{en}{IEEE Transactions on
  Information Theory}}, vol.~50, no.~3, pp. 525--526, Mar. 2004. [Online].
  Available: \url{http://ieeexplore.ieee.org/document/1273661/}
\BIBentrySTDinterwordspacing

\bibitem{Sundaresan2007}
\BIBentryALTinterwordspacing
R.~Sundaresan, ``\BIBforeignlanguage{en}{Guessing {Under} {Source}
  {Uncertainty}},'' \emph{\BIBforeignlanguage{en}{IEEE Transactions on
  Information Theory}}, vol.~53, no.~1, pp. 269--287, Jan. 2007. [Online].
  Available: \url{http://ieeexplore.ieee.org/document/4039677/}
\BIBentrySTDinterwordspacing

\bibitem{Hanawal2010}
\BIBentryALTinterwordspacing
M.~K. Hanawal and R.~Sundaresan, ``Guessing {{Revisited}}: {{A Large Deviations
  Approach}},'' \emph{arXiv:1008.1977 [cs, math]}, Aug. 2010. [Online].
  Available: \url{http://arxiv.org/abs/1008.1977}
\BIBentrySTDinterwordspacing

\bibitem{Christiansen2013}
\BIBentryALTinterwordspacing
M.~M. Christiansen and K.~R. Duffy, ``Guesswork, {{Large Deviations}}, and
  {{Shannon Entropy}},'' \emph{IEEE Transactions on Information Theory},
  vol.~59, no.~2, pp. 796--802, Feb. 2013. [Online]. Available:
  \url{http://ieeexplore.ieee.org/document/6340341/}
\BIBentrySTDinterwordspacing

\bibitem{Sason2018a}
I.~Sason and S.~Verd{\'u}, ``Improved {{Bounds}} on {{Lossless Source Coding}}
  and {{Guessing Moments}} via {{R\'e}}nyi {{Measures}},'' \emph{IEEE
  Transactions on Information Theory}, vol.~64, no.~6, pp. 4323--4346, Jun.
  2018, arXiv:1801.01265.

\bibitem{Sason2018}
\BIBentryALTinterwordspacing
I.~Sason, ``\BIBforeignlanguage{en}{Tight {{Bounds}} on the {{R{\'e}nyi
  Entropy}} via {{Majorization}} with {{Applications}} to {{Guessing}} and
  {{Compression}}},'' \emph{\BIBforeignlanguage{en}{Entropy}}, vol.~20, no.~12,
  p. 896, Nov. 2018, arXiv:1812.03324. [Online]. Available:
  \url{http://www.mdpi.com/1099-4300/20/12/896}
\BIBentrySTDinterwordspacing

\bibitem{CCWF15}
\BIBentryALTinterwordspacing
W.~Chen, Y.~Cao, H.~Wang, and Y.~Feng, ``\BIBforeignlanguage{EN}{Minimum
  guesswork discrimination between quantum states},''
  \emph{\BIBforeignlanguage{EN}{Quantum Information \& Computation}}, Jul.
  2015, arXiv:1410.5180. [Online]. Available:
  \url{https://dl.acm.org/doi/abs/10.5555/2871422.2871424}
\BIBentrySTDinterwordspacing

\bibitem{Koenig2009}
R.~Koenig, R.~Renner, and C.~Schaffner, ``The operational meaning of min- and
  max-entropy,'' \emph{IEEE Transactions on Information Theory}, vol.~55,
  no.~9, pp. 4337--4347, Sep. 2009, arXiv:0807.1338.

\bibitem{FB14}
S.~{Fehr} and S.~{Berens}, ``On the conditional {R\'e}nyi entropy,'' \emph{IEEE
  Transactions on Information Theory}, vol.~60, no.~11, pp. 6801--6810, Nov.
  2014.

\bibitem{FL13}
R.~L. Frank and E.~H. Lieb, ``Monotonicity of a relative {R\'enyi} entropy,''
  \emph{Journal of Mathematical Physics}, vol.~54, no.~12, p. 122201, December
  2013, arXiv:1306.5358.

\bibitem{MDS+13}
M.~Müller-Lennert, F.~Dupuis, O.~Szehr, S.~Fehr, and M.~Tomamichel, ``On
  quantum {R\'enyi} entropies: a new generalization and some properties,''
  \emph{Journal of Mathematical Physics}, vol.~54, no.~12, p. 122203, Dec.
  2013, arXiv:1306.3142.

\bibitem{WWY14}
M.~M. Wilde, A.~Winter, and D.~Yang, ``Strong converse for the classical
  capacity of entanglement-breaking and {Hadamard} channels via a sandwiched
  {R\'enyi} relative entropy,'' \emph{Communications in Mathematical Physics},
  vol. 331, no.~2, pp. 593--622, October 2014, arXiv:1306.1586.

\bibitem{Tom16}
M.~Tomamichel, \emph{Quantum {Information} {Processing} with {Finite}
  {Resources} - {Mathematical} {Foundations}}.\hskip 1em plus 0.5em minus
  0.4em\relax Springer, 2016, vol.~5, arXiv:1504.00233.

\bibitem{guesswork_code}
\BIBentryALTinterwordspacing
E.~P. Hanson, ``ericphanson/guessworkquantumsideinfo.jl: v0.1.2,'' Sep. 2020.
  [Online]. Available: \url{https://doi.org/10.5281/zenodo.4018151}
\BIBentrySTDinterwordspacing

\bibitem{Julia}
\BIBentryALTinterwordspacing
J.~Bezanson, A.~Edelman, S.~Karpinski, and V.~B. Shah, ``Julia: A fresh
  approach to numerical computing,'' \emph{SIAM review}, vol.~59, no.~1, pp.
  65--98, 2017. [Online]. Available: \url{https://doi.org/10.1137/141000671}
\BIBentrySTDinterwordspacing

\bibitem{Convex.jl-2014}
M.~Udell, K.~Mohan, D.~Zeng, J.~Hong, S.~Diamond, and S.~Boyd, ``Convex
  optimization in {J}ulia,'' in \emph{Proceedings of the 1st First Workshop for
  High Performance Technical Computing in Dynamic Languages}.\hskip 1em plus
  0.5em minus 0.4em\relax IEEE Press, 2014, pp. 18--28.

\bibitem{Wat18}
J.~Watrous, \emph{The {Theory} of {Quantum} {Information}}, 1st~ed.\hskip 1em
  plus 0.5em minus 0.4em\relax Cambridge University Press, Apr. 2018.

\bibitem{Par99}
K.~R. Parthasarathy, ``\BIBforeignlanguage{en}{Extremal decision rules in
  quantum hypothesis testing},'' \emph{\BIBforeignlanguage{en}{Infinite
  Dimensional Analysis, Quantum Probability and Related Topics}}, vol.~02,
  no.~04, pp. 557--568, Dec. 1999.

\bibitem{BV04}
S.~P. Boyd and L.~Vandenberghe, \emph{\BIBforeignlanguage{en}{Convex
  optimization}}.\hskip 1em plus 0.5em minus 0.4em\relax Cambridge, UK ; New
  York: Cambridge University Press, 2004, 40706.

\bibitem{BDNK19}
S.~Bhela, D.~Deka, H.~Nagarajan, and V.~Kekatos, ``Designing {Power} {Grid}
  {Topologies} for {Minimizing} {Network} {Disturbances}: {An} {Exact} {MILP}
  {Formulation},'' Mar. 2019, arXiv:1903.08354.

\bibitem{CLV18}
C.~Coey, M.~Lubin, and J.~P. Vielma, ``Outer {Approximation} {With} {Conic}
  {Certificates} {For} {Mixed}-{Integer} {Convex} {Problems},'' Aug. 2018,
  arXiv:1808.05290.

\bibitem{GLS93}
\BIBentryALTinterwordspacing
M.~Grötschel, L.~Lovász, and A.~Schrijver, \emph{Geometric {Algorithms} and
  {Combinatorial} {Optimization}}, ser. Algorithms and {Combinatorics}.\hskip
  1em plus 0.5em minus 0.4em\relax Berlin, Heidelberg: Springer Berlin
  Heidelberg, 1993, vol.~2. [Online]. Available:
  \url{http://link.springer.com/10.1007/978-3-642-78240-4}
\BIBentrySTDinterwordspacing

\bibitem{Wat09a}
\BIBentryALTinterwordspacing
J.~Watrous, ``Semidefinite {Programs} for {Completely} {Bounded} {Norms},''
  \emph{Theory of Computing}, vol.~5, pp. 217--238, Nov. 2009, arXiv:
  0901.4709. [Online]. Available:
  \url{https://theoryofcomputing.org/articles/v005a011/}
\BIBentrySTDinterwordspacing

\bibitem{WS17}
\BIBentryALTinterwordspacing
M.~Wilhelm and M.~D. Stuber, ``Easy {Advanced} {Global} {Optimization}
  ({EAGO}): {An} {Open}-{Source} {Platform} for {Robust} and {Global}
  {Optimization} in {Julia}.''\hskip 1em plus 0.5em minus 0.4em\relax AIChE,
  Oct. 2017. [Online]. Available:
  \url{https://aiche.confex.com/aiche/2017/meetingapp.cgi/Paper/490467}
\BIBentrySTDinterwordspacing

\bibitem{GamrathEtal2020OO}
\BIBentryALTinterwordspacing
G.~Gamrath, D.~Anderson, K.~Bestuzheva, W.-K. Chen, L.~Eifler, M.~Gasse,
  P.~Gemander, A.~Gleixner, L.~Gottwald, K.~Halbig, G.~Hendel, C.~Hojny,
  T.~Koch, P.~L. Bodic, S.~J. Maher, F.~Matter, M.~Miltenberger, E.~Mühmer,
  B.~Müller, M.~Pfetsch, F.~Schlösser, F.~Serrano, Y.~Shinano, C.~Tawfik,
  S.~Vigerske, F.~Wegscheider, D.~Weninger, and J.~Witzig, ``The {SCIP}
  optimization suite 7.0,'' Optimization Online, Technical report, Mar. 2020.
  [Online]. Available:
  \url{http://www.optimization-online.org/DB_HTML/2020/03/7705.html}
\BIBentrySTDinterwordspacing

\bibitem{Cbc}
\BIBentryALTinterwordspacing
johnjforrest, S.~Vigerske, H.~G. Santos, T.~Ralphs, L.~Hafer, B.~Kristjansson,
  jpfasano, EdwinStraver, M.~Lubin, rlougee, jpgoncal1, h-i gassmann, and
  M.~Saltzman, ``coin-or/cbc: Version 2.10.5,'' Mar. 2020. [Online]. Available:
  \url{https://doi.org/10.5281/zenodo.3700700}
\BIBentrySTDinterwordspacing

\bibitem{SCS}
\BIBentryALTinterwordspacing
B.~O'Donoghue, E.~Chu, N.~Parikh, and S.~Boyd, ``{SCS}: Splitting conic solver,
  version 2.1.2,'' Nov. 2019. [Online]. Available:
  \url{https://github.com/cvxgrp/scs}
\BIBentrySTDinterwordspacing

\bibitem{MOA11}
\BIBentryALTinterwordspacing
A.~W. Marshall, I.~Olkin, and B.~C. Arnold,
  \emph{\BIBforeignlanguage{en}{Inequalities: {Theory} of {Majorization} and
  {Its} {Applications}}}, 2nd~ed., ser. Springer {Series} in
  {Statistics}.\hskip 1em plus 0.5em minus 0.4em\relax New York:
  Springer-Verlag, 2011. [Online]. Available:
  \url{//www.springer.com/gb/book/9780387400877}
\BIBentrySTDinterwordspacing

\bibitem{HD18}
\BIBentryALTinterwordspacing
E.~P. Hanson and N.~Datta, ``Maximum and minimum entropy states yielding local
  continuity bounds,'' \emph{Journal of Mathematical Physics}, vol.~59, no.~4,
  p. 042204, Apr. 2018, arXiv:1706.02212. [Online]. Available:
  \url{https://aip.scitation.org/doi/10.1063/1.5000120}
\BIBentrySTDinterwordspacing

\bibitem{BB84}
C.~H. Bennett and G.~Brassard, ``Quantum cryptography: Public key distribution
  and coin tossing,'' in \emph{Proceedings of IEEE International Conference on
  Computers Systems and Signal Processing}, Bangalore, India, December 1984,
  pp. 175--179.

\bibitem{YFN+10}
M.~Yamashita, K.~Fujisawa, K.~Nakata, M.~Nakata, M.~Fukuda, K.~Kobayashi, and
  K.~Goto, ``\BIBforeignlanguage{en}{A high-performance software package for
  semidefinite programs: {SDPA} 7},'' p.~26, 2010.

\bibitem{Holevo1973}
A.~S. Holevo, ``Information-theoretical aspects of quantum measurement,''
  \emph{Problemy Peredachi Informatsii}, vol.~9, no.~2, pp. 31--42, 1973,
  {E}nglish Translation: A. S. Holevo, Problems of Information Transmission,
  vol. 9, pp. 110-118, 1973.

\bibitem{Ekert:1991:661}
A.~K. Ekert, ``Quantum cryptography based on {Bell's} theorem,'' \emph{Physical
  Review Letters}, vol.~67, no.~6, pp. 661--663, August 1991.

\bibitem{Pliam1998}
\BIBentryALTinterwordspacing
J.~Pliam, ``The disparity between work and entropy in cryptology,'' Cryptology
  ePrint Archive, Report 1998/024, 1998. [Online]. Available:
  \url{http://eprint.iacr.org/1998/024}
\BIBentrySTDinterwordspacing

\bibitem{BFT17}
M.~Berta, O.~Fawzi, and M.~Tomamichel, ``\BIBforeignlanguage{en}{On variational
  expressions for quantum relative entropies},''
  \emph{\BIBforeignlanguage{en}{Letters in Mathematical Physics}}, vol. 107,
  no.~12, pp. 2239--2265, Dec. 2017, arXiv:1512.02615.

\end{thebibliography}
\end{document}